\newcommand{\real}{\ensuremath{\mathbb{R}}}
\newcommand{\nat}{\ensuremath{\mathbb{N}}}
\newcommand{\compl}{\ensuremath{\mathbb{C}}}
\newcommand{\smat}[1]{\ensuremath{\left[\begin{smallmatrix}#1\end{smallmatrix}\right]}}
\newcommand{\bmat}[1]{\ensuremath{\begin{bmatrix}#1\end{bmatrix}}}
\newcommand{\tu}[1]{{\textup{#1}}}
\newcommand{\mb}[1]{\mathbf{#1}}
\newcommand{\sa}[1]{\mathsf{#1}}
\newcommand{\cnri}{\ensuremath{z = x + j y \in \compl \colon}}
\newcommand{\ri}{\ensuremath{x + j y \in \compl \colon}}
\newcommand{\pseudoX}{\textsl{\textsf{x}}}
\newcommand{\co}{\textup{c}}
\newcommand{\si}{\textup{s}}
\DeclareMathOperator{\He}{He}
\DeclareMathOperator{\Tr}{Tr}
\DeclareMathSymbol{\cdoT}{\mathord}{symbols}{"01}
\newtheorem{remark}{Remark}
\newtheorem{lemma}{Lemma}
\newtheorem{corollary}{Corollary}
\newtheorem{theorem}{Theorem}
\newtheorem{proposition}{Proposition}
\newtheorem{assumption}{Assumption}
\newtheorem{definition}{Definition}
\newtheorem{fact}{Fact}
\newtheorem{example}{Example}
\newcommand{\grayRow}{\rowcolor[rgb]{0.973,0.973,0.973}}
\title{\LARGE \bf Learning controllers for performance through LMI regions}
\author{Andrea Bisoffi$^{1}$, Claudio De Persis$^{1}$ and Pietro Tesi$^{2}$
\thanks{$^\star$This research is partially supported by a Marie Sk{\l}odowska-Curie COFUND grant, no.~754315 and by NWO, project no.~15472.}
\thanks{$^{1}$A. Bisoffi and C. De Persis are with ENTEG and J.C. Willems Center for Systems and Control, University of Groningen, 9747 AG Groningen, The Netherlands
{\tt\small \{a.bisoffi,c.de.persis\}@rug.nl}}%
\thanks{$^{2}$P. Tesi is with DINFO, University of Florence, 50139 Florence, Italy         {\tt\small \{pietro.tesi\}@unifi.it}}%
}
\begin{document}

\maketitle
\thispagestyle{empty}
\pagestyle{empty}

\begin{abstract}
In an open-loop experiment, an input sequence is applied to an unknown linear time-invariant system (in continuous or discrete time) affected also by an unknown-but-bounded disturbance sequence (with an energy or instantaneous bound); the corresponding state sequence is measured.
The goal is to design directly from the input and state sequences a controller that enforces a certain performance specification on the transient behaviour of the unknown system.
The performance specification is expressed through a subset of the complex plane where closed-loop eigenvalues need to belong, a so called LMI region.
For this control design problem, we provide here convex programs to enforce the performance specification from data in the form of linear matrix inequalities (LMI).
For generic LMI regions, these are sufficient conditions to assign the eigenvalues within the LMI region for all possible dynamics consistent with data, and become necessary and sufficient conditions for special LMI regions.
In this way, we extend classical model-based conditions from a seminal work in the literature to the setting of data-driven control from noisy data.
Through two numerical examples, we investigate how these data-based conditions compare with each other.
\end{abstract}

\section{Introduction}

Whenever it is challenging or cumbersome to derive a model for a process to be controlled or to identify unambiguously its parameters, a viable alternative is to bypass these two steps altogether and, from data collected on the process, design directly a (feedback) controller \cite{sznaier2021survey}.
Direct data-driven control was conceived within the discipline of system identification, and is enjoying renewed popularity thanks a fundamental result by Willems et al. \cite[Thm.~1]{willems2005note} for linear systems and noiseless data, see also \cite{coulson2018deepc,depersis2020tac}. 
A natural continuation within linear systems has been how to handle the realistic case of noisy data, whose induced uncertainty has been addressed via tools from robust control.
With bounded noise and noisy input-state data points collected in an open-loop experiment, one ends up with a \emph{set} of dynamical matrices $(A,B)$ consistent with data and wants to design a controller that guarantees certain properties of the closed-loop system for all such $(A,B)$.
(Necessary and) sufficient conditions (typically in the form of convenient convex programs) were given in the cases of stabilization \cite{depersis2020tac,dai2020ifac}, linear quadratic regulation \cite{mania2019certainty,depersis2021lowcomplexity,xue2020datadriven}, and dynamic performance \cite{berberich2019robust,berberich2020combining,vanwaarde2020noisy}.
Imposing a certain performance specification for the process to be controlled is, in applications, as relevant as stabilization; however, for these data-based control designs, dynamic performance has been less investigated than stabilization and, to the best of our knowledge, only in terms of quadratic \cite{berberich2019robust,berberich2020combining}, $\mathcal{H}_2$ \cite{vanwaarde2020noisy,berberich2020combining} or $\mathcal{H}_\infty$ performance \cite{vanwaarde2020noisy}.

An alternative method to impose performance specifications is by imposing  that the closed-loop eigenvalues belong to specific subsets of the complex plane.
Indeed, some salient characteristics of the closed-loop transient response (in continuous time) depend on these subsets of the complex plane: e.g., convergence rate is greater than $\ell>0$ if all eigenvalues have real part less than $-\ell$, damping ratio is greater than $\cos \theta$ if eigenvalues are within a cone with vertex in $0$ and aperture $2 \theta$, and eigenvalues in the intersection of these two sets with suitable $\ell$ and $\theta$ achieve a fast response with limited overshoot, see Example~\ref{example:wedge} later.
The fundamental work in~\cite{chilali1996hInf} showed that for a certain subset of the complex plane and a given model $(A,B)$, finding the feedback gain $K$ by which all eigenvalues of $A + B K$ belong to that subset is equivalent to solving a linear matrix inequality (LMI), see \cite[Thm.~2.2]{chilali1996hInf} recalled later in Fact~\ref{fact:S-stab}; such subsets take thus the name of LMI regions.
Notably, the open left halfplane and the open unit disk are LMI regions, and a large number of subsets of the complex plane can be expressed as LMI regions, see Fig.~\ref{fig:basic LMI regions}; moreover, the intersection of LMI regions is also \emph{equivalently} associated with the conjunction of the respective LMIs, see \cite[Cor.~2.3]{chilali1996hInf} recalled later in Fact~\ref{fact:inters}, to the effect that LMI regions are dense in the set of convex regions that are  symmetric with respect to the real axis \cite[\S II.C]{chilali1996hInf}.
In the approach for performance by LMI regions, appealing features are then that they equivalently give rise to convex inequalities and can express, or at least approximate closely, the subsets of the complex plane relevant for control purposes.
Finally, the approach with LMI regions does not exclude using also $\mathcal{H}_2$ and $\mathcal{H}_\infty$ approaches \cite[\S III]{chilali1996hInf}; however, with respect to them, it implements in an easy way performance specifications by linking the desired characteristics of the time response to regions of the complex plane that are well known to a control engineer familiar with frequency methods and loop shaping. 
Indeed, the approach by LMI regions has been used effectively in experimental applications \cite{olalla2011optimal,pereira2013multiple,poussotvassal2016gust,cocetti2020hybrid}.

All these positive features in the model-based case appear promising and have motivated us to study how to impose performance specification through LMI regions also in the data-based case.
In this case we need to assign the eigenvalues in LMI regions for all matrices $(A,B)$ consistent with data.
From a conceptual viewpoint, this is similar in nature to \cite{chilali1999robust} that investigates robustness of pole clustering in LMI regions with respect to complex unstructured and real structured uncertainty (such as parameter uncertainty); here we consider robustness with respect to a different type of uncertainty, namely, that induced by noisy data.
Our contribution is that we provide sufficient conditions to design a controller enforcing robust eigenvalue assignment in spite of noisy data for generic LMI regions and their intersections, under noise models with an energy bound on the whole noise sequence of the experiment and with an instantaneous bound on each noise element of the sequence; moreover, we obtain that these sufficient conditions become also necessary for special LMI regions; finally, all these results hold both for continuous and discrete time and are given in terms of convenient linear matrix inequalities.
The proposed data-driven approach based on LMI regions, which has not been investigated so far, constitutes an alternative method to $\mathcal{H}_2$ and $\mathcal{H}_\infty$ approaches to guarantee performance.
In a nutshell, the approach features an experiment for data collection, performance specifications are intuitively expressed as LMI regions, and the proposed convex programs design the controller to enforce the specification automatically; we then believe that the approach has the potential to incentivize these new data-based techniques among control engineers.

\textbf{\itshape Structure:} In Section~\ref{sec:review}, we report the notions needed from the model-based setting in~\cite{chilali1996hInf}.
In Section~\ref{sec:setting} we formulate the data-based problem.
In Section~\ref{sec:suff cond}, we give sufficient conditions for generic LMI regions, whereas, for special LMI regions, necessary and sufficient conditions are given in Section~\ref{sec:nec and suff cond}. Other relevant conditions for generic LMI regions are in Section~\ref{sec:suff cond alt}.
How all these conditions compare is investigated numerically in Section~\ref{sec:num ex}.

\textbf{\itshape Notation}: 
$\nat_{\ge 1}$ denotes the natural numbers $1$, $2$, \dots; $\real$ denotes the real numbers; $\compl$ denotes the complex numbers.
For $z \in \compl$, $\bar z$ denotes the complex conjugate of $z$.
Given $n \in \nat_{\ge 1}$, $I_n$ (or $I$) denotes an identity matrix of dimension $n$ (or of suitable dimension).
For a matrix with complex entries, the hermitian operator is $\He A := A + A^{\sf{H}}$; for a matrix with real entries, the transposition operator is $\Tr A := A + A^\top$.
For symmetric matrices $A$ and $C$, we sometimes abbreviate a symmetric matrix $\smat{A & B^\top\\ B & C}$ as $\smat{A & \star \\ B & C}$ or $\smat{A & B^\top\\ \star & C}$.
Positive definiteness (semidefiniteness, respectively) of a symmetric matrix $A$ is indicated as $A \succ 0$ ($A \succeq 0$, respectively).
For a $A=A^\top \succeq 0$, $A^{1/2}$ denotes the unique positive semidefinite root of $A$.
The Kronecker product is denoted by $\otimes$ and the standard properties of the Kronecker product we use can be found in \cite[\S 4.2]{horn1994topics}.

\section{Review on LMI regions}
\label{sec:review}

This section recalls the notions we need from \cite{chilali1996hInf}.

\begin{definition}{\cite[Def.~2.1]{chilali1996hInf}} A subset $\mathcal{S}$ of the complex plane is called an LMI region if for some $s \in \nat_{\ge 1}$, there exists a symmetric matrix $\alpha \in \real^{s\times s}$ and a matrix $\beta \in \real^{s \times s}$ such that
\begin{equation}
\label{set S generic}
\mathcal{S} = \{ z \in \compl \colon \alpha + z \beta + \bar z \beta^\top \prec 0 \}
\end{equation}
where the matrix $\alpha + z \beta + \bar z \beta^\top$ is Hermitian. $(\alpha,\beta)$ are called data of $\mathcal{S}$.
\end{definition}
A generic $s \in \nat_{\ge 1}$ is possible; however, $s=2$ is a convenient trade-off between tractability and expressivity since it allows expressing a plethora of well-known quadratic curves.
Most common ones are shown in Fig.~\ref{fig:basic LMI regions}.
By expressing $z$ in~\eqref{set S generic} in terms of its real and imaginary parts, we write \eqref{set S generic} for $s=2$ as
\begingroup
\setlength{\arraycolsep}{1.5pt}
\medmuskip=1mu plus 2mu
\thickmuskip=1.5mu plus 3mu
\begin{align}
& \hspace*{-5pt}\mathcal{S} = \{ \ri   \notag \\
& \hspace*{15pt} \smat{\alpha_{11} & \alpha_{12}\\ \alpha_{12} & \alpha_{22}} + (x + j y) \smat{\beta_{11} & \beta_{12} \\ \beta_{21} & \beta_{22} }  +  (x - j y) \smat{\beta_{11} & \beta_{21} \\ \beta_{12} & \beta_{22}}  \prec 0 \}  \notag \\
& \hspace*{1pt}= \{ \ri \alpha_{11} + 2 x \beta_{11} < 0,  \notag \\
& \hspace*{17pt} \alpha_{11} \alpha_{22} - \alpha_{12}^2 + 2x \big(\alpha_{11}\beta_{22} + \alpha_{22}\beta_{11} - \alpha_{12}(\beta_{12} + \beta_{21}) \big)\notag \\
& \hspace*{17pt} - x^2 \big( (\beta_{12}+\beta_{21})^2 - 4 \beta_{11}\beta_{22} \big)
- y^2 (\beta_{12} - \beta_{21})^2 > 0\}. \!\! \label{generic LMI region s=2}
\end{align}
\endgroup
The $\alpha$ and $\beta$ corresponding to the regions in Fig.~\ref{fig:basic LMI regions} are reported in the appendix, in Table~\ref{tab:sets}.

\begin{figure}
\centerline{\hspace*{8pt}\includegraphics[scale=.47]{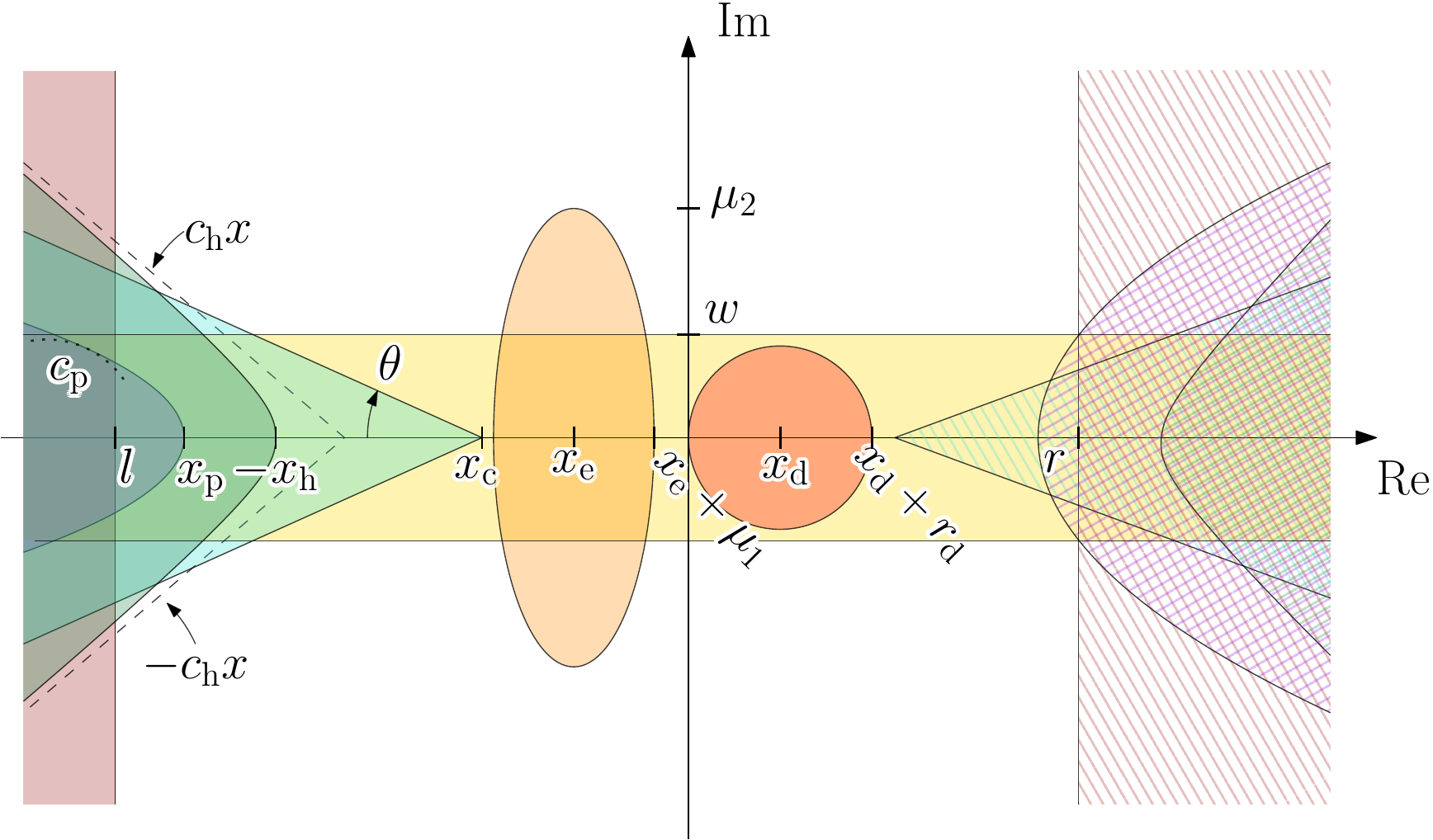}}%
\caption{Regions of the complex plane that can be represented as LMI regions for $s=2$: facing left (filled) and facing right (hatched).
The only constrained parameters of these regions are: $r_{\tu{d}} > 0$ (radius of disk), $w> 0$ (semiwidth of horizontal strip), $\mu_1 > 0$ and $\mu_2 > 0$ (semiaxes of ellipsoid), $c_{\tu{p}} > 0$ (curvature of parabola), $x_{\tu{h}} > 0$ and $c_{\tu{h}} > 0$  (vertex and angular coefficient of asymptotes of hyperbola), $\theta \in (0,\pi/2)$ (semiaperture of cone). All other parameters ($l$, $r$, $x_{\tu{d}}$, $x_{\tu{e}}$, $x_{\tu{p}}$, $x_{\tu{c}}$) are free.}
\label{fig:basic LMI regions}
\end{figure}

The next definition expresses in short that eigenvalues of a matrix belong to a certain LMI region.
\begin{definition}{\cite[p.~359]{chilali1996hInf}} For $\mathcal{S} \subseteq \compl$, the matrix $A$ is $\mathcal{S}$-stable if all eigenvalues of $A$ lie in $\mathcal{S}$.
\end{definition}

We recall a first elegant result from \cite{chilali1996hInf}.
\begin{fact}{\cite[Thm.~2.2]{chilali1996hInf}}
\label{fact:S-stab}
For an LMI region $\mathcal{S} \subseteq \compl$ with data $(\alpha,\beta)$, the matrix $A$ is $\mathcal{S}$-stable if and only if there exists a symmetric matrix $P$ such that
\begin{equation}
\label{S-stab cond}
P \succ 0, \alpha \otimes P + \beta \otimes (AP) + \beta^\top \otimes (PA^\top) \prec 0.
\end{equation}
\end{fact}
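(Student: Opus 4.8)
The plan is to prove the two implications in \eqref{S-stab cond} separately. Throughout I write $M_{\mathcal S}(A,P):=\alpha\otimes P+\beta\otimes(AP)+\beta^\top\otimes(PA^\top)$ and $f(z):=\alpha+z\beta+\bar z\beta^\top$, so that by the definition of an LMI region $z\in\mathcal S$ is equivalent to $f(z)\prec 0$. Before the two directions I would record one identity to be used repeatedly: for every invertible real matrix $R$,
\[
M_{\mathcal S}\bigl(RAR^{-1},\,RPR^\top\bigr)=(I\otimes R)\,M_{\mathcal S}(A,P)\,(I\otimes R^\top),
\]
which follows by writing out $AP$ and $PA^\top$ and using the mixed-product rule for $\otimes$. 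Since $I\otimes R$ is invertible and $RPR^\top$ is real symmetric whenever $P$ is, the pair of conditions ``$P\succ 0$ and $M_{\mathcal S}(A,P)\prec 0$'' is invariant under the change of coordinates $A\mapsto RAR^{-1}$, $P\mapsto RPR^\top$. Note also that $M_{\mathcal S}(A,P)$ is symmetric when $P$ is, so the inequality makes sense.

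\emph{Sufficiency.} I would argue by a single eigenvector test. Assume $P=P^\top\succ 0$ satisfies $M_{\mathcal S}(A,P)\prec 0$ and let $\mu$ be an eigenvalue of $A$. Since $A$ is real, $\bar\mu$ is again an eigenvalue of $A$, hence of $A^\top$, so there is $w\neq 0$ with $A^\top w=\bar\mu w$, that is, $w^{\sf H}A=\mu w^{\sf H}$. Evaluating the quadratic form of $M_{\mathcal S}(A,P)$ at $\zeta\otimes w$ for an arbitrary $\zeta\in\compl^{s}$, and using $w^{\sf H}(AP)w=\mu\,w^{\sf H}Pw$ and $w^{\sf H}(PA^\top)w=\bar\mu\,w^{\sf H}Pw$, everything collapses to $(w^{\sf H}Pw)\,\zeta^{\sf H}f(\mu)\zeta$. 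Since $w^{\sf H}Pw>0$ and this is negative for every $\zeta\neq 0$, we get $f(\mu)\prec 0$, i.e., $\mu\in\mathcal S$; as $\mu$ was an arbitrary eigenvalue, $A$ is $\mathcal S$-stable.

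\emph{Necessity.} Here I would first reduce, via the covariance identity above, to a convenient representative of the similarity class of $A$. Put $A$ in real Jordan form $A=TJT^{-1}$ with $T$ real and invertible, and split $J=D+N$, where $D$ is the block-diagonal semisimple part (with $1\times 1$ real blocks $[\lambda]$ and $2\times 2$ blocks $\smat{\sigma & \omega\\ -\omega & \sigma}$) and $N$ is the nilpotent super-diagonal part. Conjugating by a blocked diagonal scaling $Q_\delta=\mathrm{diag}(1,\delta,\delta^2,\dots)$ replaces $N$ by $\delta N$, so $J$ — hence $A$ — is similar to $D+\delta N$ for every $\delta>0$. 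Next, $D$ is real and normal, so $D=W\Lambda W^{\sf H}$ with $W$ unitary and $\Lambda=\mathrm{diag}(\lambda_1,\dots,\lambda_n)$; the $\lambda_i$ are precisely the eigenvalues of $A$ and therefore lie in $\mathcal S$. Conjugating $M_{\mathcal S}(D,I)=\alpha\otimes I+\beta\otimes D+\beta^\top\otimes D^\top$ by the unitary $I\otimes W$ and permuting the two Kronecker factors turns it into $\mathrm{blkdiag}\bigl(f(\lambda_1),\dots,f(\lambda_n)\bigr)$, which is negative definite because each $f(\lambda_i)\prec 0$; hence $M_{\mathcal S}(D,I)\prec 0$. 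Since $M_{\mathcal S}(\cdot,I)$ depends continuously (affinely) on its first argument, there is $\delta>0$ with $M_{\mathcal S}(D+\delta N,I)\prec 0$; fixing such a $\delta$ and setting $R:=TQ_\delta$, so that $A=R(D+\delta N)R^{-1}$, the covariance identity yields $M_{\mathcal S}\bigl(A,RR^\top\bigr)\prec 0$ with $P:=RR^\top=P^\top\succ 0$, which is the desired certificate.

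The sufficiency direction is essentially immediate; I expect the real work to be in necessity, and within it the only delicate point is the presence of non-trivial Jordan blocks. A Lyapunov certificate for a semisimple (indeed normal) matrix is trivial — $P=I$ works once it is unitarily diagonalized — so the crux is the observation that every $A$ is similar to its semisimple part plus an arbitrarily small nilpotent perturbation; continuity then lets that certificate survive, and the covariance identity carries it back to $A$ itself as $P=RR^\top$. The remaining effort is bookkeeping: keeping every similarity transformation real so that $RPR^\top$ stays real symmetric, and verifying that the permutation of Kronecker factors really does block-diagonalize $M_{\mathcal S}(D,I)$ into the blocks $f(\lambda_i)$; no hard estimate is required.
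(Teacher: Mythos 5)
The paper states this result as a recalled Fact and gives no proof of its own, deferring to \cite[Thm.~2.2]{chilali1996hInf}; so there is nothing in the paper to compare against line by line. Your argument is correct and complete, and it is essentially the standard Chilali--Gahinet proof: sufficiency by evaluating the Hermitian form of $M_{\mathcal S}(A,P)$ at $\zeta\otimes w$ with $w$ a left eigenvector (using that a real symmetric negative definite matrix is also negative definite as a Hermitian form over $\compl$), and necessity by reducing $A$ via a real similarity to its normal semisimple part plus an arbitrarily small nilpotent perturbation, for which $P=I$ certifies $\mathcal S$-stability, then transporting the certificate back through the covariance identity $M_{\mathcal S}(RAR^{-1},RPR^\top)=(I\otimes R)M_{\mathcal S}(A,P)(I\otimes R^\top)$.
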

\smallskip

As it emerges from the proof of \cite[Thm.~2.2]{chilali1996hInf}, an LMI region $\mathcal{S}$ is not limited to be in the left halfplane.
Then, Fact~\ref{fact:S-stab} enables treating continuous and discrete time simultaneously.
For an LMI region $\mathcal{S}$ with data $(\alpha, \beta)$, define its characteristic matrix $M_\mathcal{S}$ as
\begin{equation}
\label{charact matrix}
M_\mathcal{S}(A,P) := \alpha \otimes P + \beta \otimes (AP) + \beta^\top \otimes (PA^\top),
\end{equation}
so that $M_\mathcal{S}(A,P) \prec 0$ is precisely the main condition in \eqref{S-stab cond}.
This brings us to a second key result from \cite{chilali1996hInf}, recalled next.
\begin{fact}{\cite[Cor.~2.3]{chilali1996hInf}}
\label{fact:inters}
Given two LMI regions $\mathcal{S}_1$ with data $(\alpha_1, \beta_1)$ and $\mathcal{S}_2$ with data $(\alpha_2,\beta_2)$, a matrix $A$ is both $\mathcal{S}_1$-stable and $\mathcal{S}_2$-stable if and only if there exists a symmetric positive definite matrix $P$ such that $M_{\mathcal{S}_1}(A,P) \prec 0$ and $M_{\mathcal{S}_2}(A,P) \prec 0$.
\end{fact}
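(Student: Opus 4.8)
The plan is to reduce the claim to Fact~\ref{fact:S-stab} by showing that the intersection of two LMI regions is itself an LMI region, whose characteristic matrix in the sense of~\eqref{charact matrix} is block-diagonal with $M_{\mathcal{S}_1}$ and $M_{\mathcal{S}_2}$ as its two diagonal blocks. This settles both implications of the biconditional at once.

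First I would exhibit data for $\mathcal{S}_1 \cap \mathcal{S}_2$. Take $\alpha := \smat{\alpha_1 & 0 \\ 0 & \alpha_2}$, which is symmetric, and $\beta := \smat{\beta_1 & 0 \\ 0 & \beta_2}$. Then, for every $z \in \compl$,
\[
\alpha + z \beta + \bar z \beta^\top = \bmat{\alpha_1 + z \beta_1 + \bar z \beta_1^\top & 0 \\ 0 & \alpha_2 + z \beta_2 + \bar z \beta_2^\top},
\]
which is Hermitian and negative definite exactly when both diagonal blocks are, i.e.\ exactly when $z \in \mathcal{S}_1 \cap \mathcal{S}_2$. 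Hence $\mathcal{S}_1 \cap \mathcal{S}_2$ is an LMI region with data $(\alpha,\beta)$, and $A$ being simultaneously $\mathcal{S}_1$-stable and $\mathcal{S}_2$-stable is the very same statement as $A$ being $(\mathcal{S}_1 \cap \mathcal{S}_2)$-stable, since each amounts to every eigenvalue of $A$ lying in $\mathcal{S}_1 \cap \mathcal{S}_2$.

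Next I would invoke Fact~\ref{fact:S-stab} for the LMI region $\mathcal{S}_1 \cap \mathcal{S}_2$ with data $(\alpha,\beta)$: $A$ is $(\mathcal{S}_1 \cap \mathcal{S}_2)$-stable if and only if there is a symmetric $P \succ 0$ with $\alpha \otimes P + \beta \otimes (AP) + \beta^\top \otimes (PA^\top) \prec 0$. The last step is to unfold this condition: because $\alpha$ and $\beta$ are block-diagonal, each of $\alpha \otimes P$, $\beta \otimes (AP)$, $\beta^\top \otimes (PA^\top)$ is block-diagonal with the corresponding sub-blocks, so their sum equals $M_{\mathcal{S}_1}(A,P) \oplus M_{\mathcal{S}_2}(A,P)$; a symmetric block-diagonal matrix is negative definite if and only if every diagonal block is, which is precisely the requirement that $M_{\mathcal{S}_1}(A,P) \prec 0$ and $M_{\mathcal{S}_2}(A,P) \prec 0$ for this common $P$. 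Chaining the three equivalences yields the statement.

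I do not foresee a genuine obstacle, since the argument merely reduces to the already-available Fact~\ref{fact:S-stab}; the two points that deserve a moment of care are (i) that a Hermitian block-diagonal matrix is negative definite if and only if all its diagonal blocks are (used both to identify $\mathcal{S}_1\cap\mathcal{S}_2$ and to split its characteristic matrix), and (ii) the Kronecker-product bookkeeping, where the factor $P$ must be kept on the \emph{right} of every Kronecker product so that the left block-diagonal structure of $\alpha$ and $\beta$ is inherited. A more hands-on alternative would be to revisit the proof of Fact~\ref{fact:S-stab} and observe that the certifying $P$ built there depends only on $A$ (through an almost-diagonalizing similarity and a small perturbation parameter) and not on the region data, so a single $P$ can be made to serve both regions; but the direct-sum reduction above is shorter and relies only on what has been recalled.
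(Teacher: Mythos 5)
Your argument is correct: the reduction to Fact~\ref{fact:S-stab} via the block-diagonal data $\alpha = \mathrm{diag}(\alpha_1,\alpha_2)$, $\beta = \mathrm{diag}(\beta_1,\beta_2)$ for $\mathcal{S}_1 \cap \mathcal{S}_2$, together with the observation that the resulting characteristic matrix is the direct sum $M_{\mathcal{S}_1}(A,P) \oplus M_{\mathcal{S}_2}(A,P)$, is exactly the standard proof. The paper itself states this fact without proof, citing \cite[Cor.~2.3]{chilali1996hInf}, and your reconstruction coincides with the argument given in that reference.
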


From linear systems theory, Hurwitz and Schur stability of a matrix $A$ correspond to eigenvalues of $A$ lying respectively in the open left halfplane
\begin{equation}
\label{Hurwitz as LMI region}
\mathcal{S}_{\tu{H}} := \{ \ri x < 0 \} = \{ z \in \compl \colon 0 + z + \bar z < 0\}
\end{equation}
and in the open unit disk
\begin{align}
\mathcal{S}_{\tu{S}}  & := \{ \ri x^2 + y^2 < 1 \} \notag \\
& = \{ z \in \compl \colon \smat{-1 & 0\\ 0 & -1} + z \smat{0 & 0 \\ -1 & 0} + \bar z \smat{0 & -1 \\0 & 0} \prec 0\} \label{Schur as LMI region}
\end{align}
or, equivalently, to the existence of a symmetric $P$ satisfying the Lyapunov conditions
\begin{equation*}
\big( P \succ 0, A P + P A^\top \prec 0 \big) \text{ and }
\big( P \succ 0, A P A^\top - P \prec 0 \big);
\end{equation*}
these precise conditions can  be obtained by using $\mathcal{S}_{\tu{H}}$ and $\mathcal{S}_{\tu{S}}$ as LMI regions and applying Fact~\ref{fact:S-stab} to them.
On the other hand, Fact~\ref{fact:S-stab} alone enables considering more general subsets in the complex plane, and Fact~\ref{fact:inters} using their intersections. 

A controller that assigns for a closed-loop system $\dot x/x^+ = A_{\tu{cl}} x$ the eigenvalues of $A_{\tu{cl}}$ in a certain region of the complex plane can effectively enforce meaningful performance specifications since different regions of the complex plane for the eigenvalues of $A_{\tu{cl}}$ correspond to different transient behaviours, as we illustrate in the next example.

\begin{example}
\label{example:wedge}
For suitable parameters $\ell > 0$, $\rho > 0$, $\theta \in (0,\pi/2)$, consider the subset $\mathcal{S}(\ell,\rho,\theta)$ depicted in Fig.~\ref{fig:S(ell,rho,theta)}, left.
$\mathcal{S}(\ell,\rho,\theta) = \{ \cnri x < -\ell \} \cap \{ \cnri x^2 + y^2 < \rho^2 \} \cap \{ \cnri (\cos \theta) |y| < - (\sin \theta) x, x<0 \}$: hence, it guarantees a minimum convergence rate of $\ell$ (halfplane), a maximum natural frequency of $\rho$ (disk) and a minimum damping ratio $\cos \theta$ (cone).
In terms of performance, these correspond to upper bounds on the settling time, the overshoot, the frequency of oscillatory modes and the magnitude of high-frequency poles \cite{chilali1996hInf}\cite[\S 3.3-3.4]{franklin1994feedback}.
By Fact~\ref{fact:inters}, eigenvalues of $A_{\tu{cl}}$ are located in $\mathcal{S}(\ell,\rho,\theta)$ if there exists $P = P^\top \succ 0$ such that
\begin{align*}
& \smat{\ell & 0\\ 0 & -1} \otimes P
+ \Tr\big\{ \smat{1/2 & 0\\ 0 & 0} \otimes (A_{\tu{cl}} P) \big\} \prec 0, \\
& \smat{-\rho & 0\\ 0 & -\rho} \otimes P
+ \Tr\big\{ \smat{0 & 0\\ -1 & 0} \otimes (A_{\tu{cl}} P) \big\} \prec 0, \\
& \Tr\big\{ \smat{\sin \theta & \cos \theta \\ -\cos \theta & \sin \theta} \otimes (A_{\tu{cl}} P) \big\} \prec 0.
\end{align*}
\end{example}
\smallskip

When designing a controller $u = K x$ for $\dot x/x^+ = A x + B u$, one considers a closed-loop matrix $A + B K$ in~\eqref{S-stab cond} and looks for $P = P^\top \succ 0$ and $K$ such that
\begin{equation*}
\alpha \otimes P + \beta \otimes ((A+BK)P) + \beta^\top \otimes (P(A+BK)^\top) \prec 0.
\end{equation*}
This inequality is not linear in $P$ and $K$, so one uses instead%
\begin{subequations}%
\label{sol:model based}%
\begin{align}
& \text{find} & & \hspace*{-4pt} P = P^\top \succ 0, Y \\
& \text{s.~t.} & &  \hspace*{-4pt} \alpha \otimes P + \Tr\big\{ \beta \otimes (AP + BY) \big\} \prec 0 \label{sol:model based:lmi}
\end{align}
\end{subequations}
with \eqref{sol:model based:lmi} now linear in $P$ and $Y$. From the underlying change of variables, $K$ is $Y P^{-1}$.
In the presence of $r$ LMI regions $\mathcal{S}_i$ with data $(\alpha_i, \beta_i)$, $i = 1, \dots , r$, \eqref{sol:model based} extends by Fact~\ref{fact:inters} to
\begin{subequations}%
\label{sol:model based multiple}%
\begin{align}
& \text{find} & & \hspace*{-4pt} P = P^\top \succ 0, Y \\
& \text{s.~t.} & &  \hspace*{-4pt} \alpha_i \otimes P + \Tr\big\{ \beta_i \otimes (AP + BY) \big\} \prec 0, i=1,\dots, r. \label{sol:model based multiple:lmi}
\end{align}
\end{subequations}

\begin{figure}
\centerline{\includegraphics[height=3.5cm]{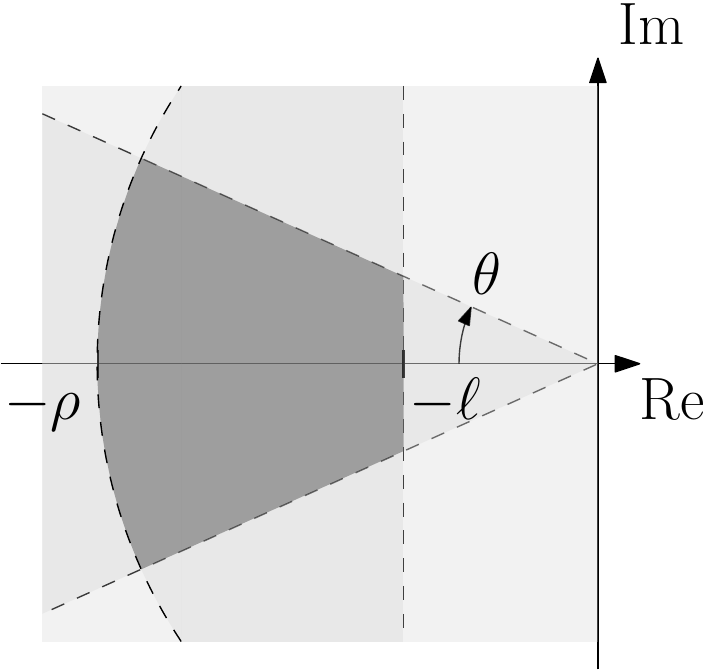}~~\includegraphics[height=3.5cm]{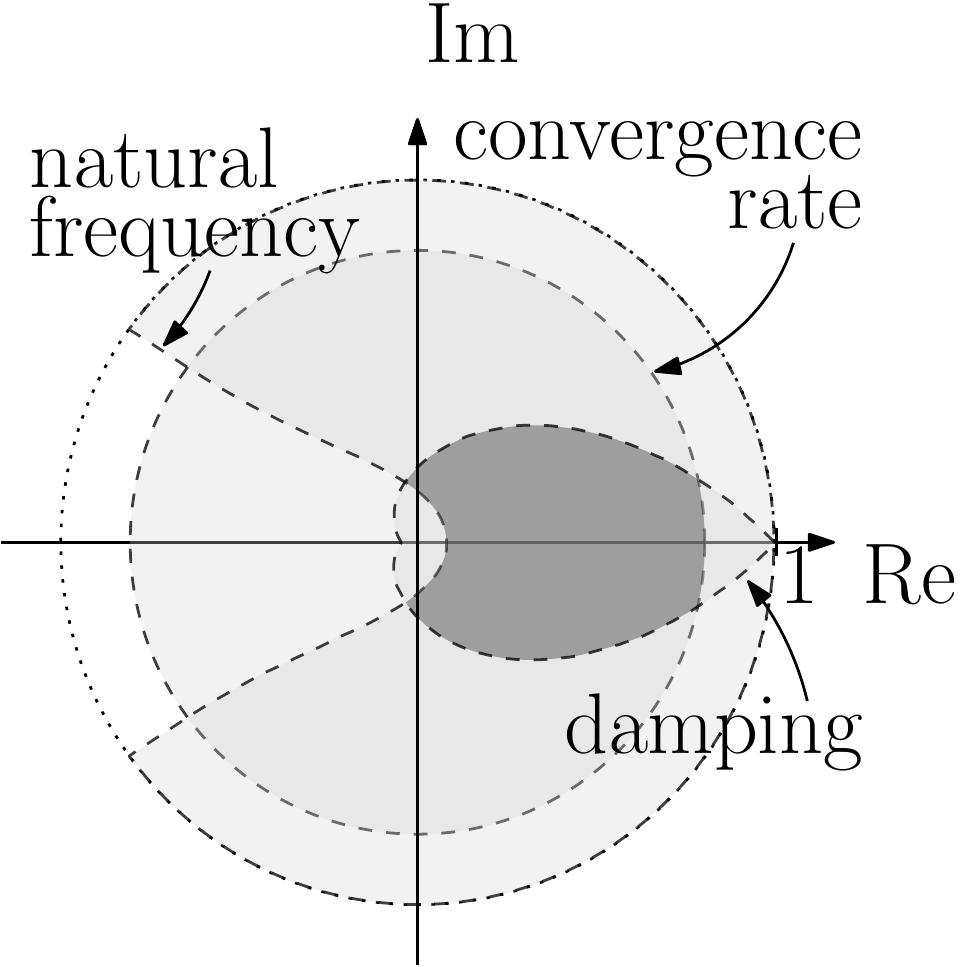}}
\caption{(Left) Region $\mathcal{S}(\ell,\rho,\theta)$ of Example~\ref{example:wedge} enforcing a desirable transient behaviour in continuous time.
(Right) The analogous region in discrete time where loci of constant natural frequency, convergence rate and damping are indicated, see Example~\ref{example:perf dt}.
}
\label{fig:S(ell,rho,theta)}
\end{figure}

\section{Data-driven control by LMI regions}
\label{sec:setting}

\subsection{Problem formulation}

Consider a linear time-invariant system
\begin{equation}
\label{sys}
x^\circ = A_\star x + B_\star u + d
\end{equation}
where $x \in \real^n$ is the state, $u \in \real^m$ is the input, $d \in \real^n$ is the disturbance and $x^\circ$ represents the time derivative $\dot x$ of state $x$ in continuous time or the update $x^+$ of the state $x$ in discrete time. 
For convenience, we call $x^\circ$ state preview. 
The matrices $A_\star$ and $B_\star$ are \emph{unknown} to us, and we rely instead on data collected through an experiment on the system.
Specifically, we apply an input sequence $u(t_0)$, $u(t_1)$, \dots $u(t_{T-1})$ and measure the corresponding state and state preview sequences $x(t_0)$, $x(t_1)$, \dots, $x(t_{T-1})$ and $x^\circ(t_0)$, $x^\circ(t_1)$, \dots, $x^\circ(t_{T-1})$, where the preview sequence in continuous time is the sequence of state derivatives $\dot x(t_0)$, $\dot x(t_1)$, \dots, $\dot x(t_{T-1})$ and in discrete time is the sequence of states $x(t_1)$, $x(t_2)$, \dots, $x(t_{T})$.
During the experiment, a disturbance sequence $d(t_0)$, $d(t_1)$, \dots, $d(t_{T-1})$ acts on system \eqref{sys} and affects the evolution of the system.
This sequence is also \emph{unknown} to us, and due to its influence on the state evolution, we say that data are noisy.
In summary, the measured sequences are collected in the matrices
\begin{subequations}
\label{data matrices}
\begin{align}
U_0 & := \bmat{u(t_0)  & \dots & u(t_{T-1})}\\
X_0 & := \bmat{x(t_0)  & \dots  & x(t_{T-1})}\\
X_1 & := \bmat{x^\circ(t_0)  & \dots  & x^\circ(t_{T-1})} \\
& \phantom{:}=
\begin{cases}
\bmat{\dot x(t_0)  & \dots  & \dot x(t_{T-1})} & \text{ in cont. time}\\
\bmat{x(t_1)  & \dots  & x(t_{T})} & \text{ in discr. time} \notag
\end{cases}
\end{align}
\end{subequations}
and the unknown disturbance sequence in $D_0 := \bmat{d(t_0)  & \dots & d(t_{T-1})}$.
The times $t_0$, $t_1$, \dots, $t_{T-1}$ are taken as the $0, 1, \dots, T-1$ multiples of a certain period $T_{\tu{s}}$; this is a natural choice in discrete time since these times correspond to periodic sampling times, and we adopt the same choice in continuous time as well (although this is not necessary).
Since the data generation mechanism is \eqref{sys}, the data points in the experiment satisfy
\begin{equation*}
X_1 = A_\star X_0 + B_\star U_0 + D_0.
\end{equation*}
As for the disturbance sequence, we know only that, for some $p \in \nat_{\ge 1}$ and some  matrix $\Delta \in \real^{n \times p}$, it belongs to the set
\begin{equation}
\label{set D}
\mathcal{D} := \{ D \in \real^{n \times T} \colon D D^\top \preceq \Delta \Delta^\top \}
\end{equation}
and this corresponds to knowing a bound on the energy of any disturbance sequence, and in particular of $D_0$, which satisfies $D_0 \in \mathcal{D}$.
The pairs $(A,B)$ that could have generated the data points $U_0$, $X_0$, $X_1$ for a disturbance sequence $D \in \mathcal{D}$ correspond to the set
\begin{equation}
\label{set C}
\mathcal{C} := \{ (A,B) \colon X_1 = A X_0 + B U_0 + D, D \in \mathcal{D} \},
\end{equation}
and $\mathcal{C}$ is called the set of \emph{matrices consistent with data}.
Since $D_0 \in \mathcal{D}$, we have $(A_\star, B_\star) \in \mathcal{C}$.
The objective of this work is to design a linear feedback controller $u = K x$ that assigns the eigenvalues of the closed-loop matrix $A_\star + B_\star K$ within a certain subset $\mathcal{R}$ of the complex plane.
Since $A_\star$ and $B_\star$ are unknown, this is achieved by imposing that $K$ assigns the eigenvalues of $A + B K$ within $\mathcal{R}$ for all $(A,B) \in \mathcal{C}$.
As a first step, we address the case of $\mathcal{R}$ given by a single LMI region $\mathcal{S}$ with data $(\alpha, \beta)$ where the objective becomes
\begin{subequations}
\label{probl single LMI region}
\begin{align}
& \text{find} & & \hspace*{-4pt} P = P^\top \succ 0, K \\
& \text{s.~t.} & &  \hspace*{-4pt} \alpha \otimes P + \beta \otimes \big( (A + B K ) P \big) \notag \\
& & & \hspace*{-4pt} \hspace*{2pt} + \beta^\top \otimes \big( P (A + B K)^\top \big) \prec 0 \quad \forall (A,B) \in \mathcal{C}.
\end{align}
\end{subequations}
In words, solving this problem yields a certificate and a controller for $\mathcal{S}$-stability, i.e., the Lyapunov-like matrix $P$ and the gain $K$.
For $\alpha = 0$ and $\beta = 1$, \eqref{probl single LMI region} becomes a classical continuous-time stabilization problem and the corresponding $V(x) = x^\top P x$ is a \emph{common} Lyapunov function \cite[\S II.B]{vanwaarde2020noisy} since it accommodates all matrices $(A,B)$ consistent with data.
Analogously, the matrix $P$ obtained as a solution to a generic \eqref{probl single LMI region} qualifies as a \emph{common} Lyapunov-like matrix.
As a second step, we address the case of $\mathcal{R}$ given by the intersection of $r$ LMI regions $\mathcal{S}_i$, $i=1,\dots, r$, each with data $(\alpha_i, \beta_i)$, that is, $\mathcal{R} := \bigcap_{i=1}^{r} \mathcal{S}_i$.
The objective becomes
\begin{subequations}
\label{probl multiple LMI regions}
\begin{align}
& \text{find}  & & \hspace*{-4pt} P = P^\top \succ 0, K \\
& \text{s.~t.} & &  \hspace*{-4pt} M_{\mathcal{S}_1} \big( (A + B K), P \big) \prec 0, \notag \\
& & & \hspace*{-4pt} \hspace*{2pt} \dots, M_{\mathcal{S}_r} \big( (A + B K), P \big) \prec 0 \quad \forall (A,B) \in \mathcal{C}
\end{align}
\end{subequations}
where the definition of characteristic matrices $M_{\mathcal{S}_1}$, \dots, $M_{\mathcal{S}_r}$ is in \eqref{charact matrix}.
\eqref{probl multiple LMI regions} is the natural extension of~\eqref{probl single LMI region} by taking into account Fact~\ref{fact:inters} for an intersection of LMI regions.

\subsection{Equivalent forms of set $\mathcal{C}$}

The set $\mathcal{C}$ introduced in~\eqref{set C} plays a key role in the developments and in this section we present for it three different forms equivalent to each other.
The first form is%
\begin{subequations}
\label{set C form 1}
\begin{align}
& \mathcal{C} = \{ \bmat{A & B} = Z^\top \colon \bmat{I & Z^\top} \bmat{\mb{C} & \mb{B}^\top\\ \mb{B} & \mb{A} } \bmat{I\\ Z} \preceq 0 \} \label{set C form 1:set only}\\
& \mb{C} : =  - \Delta \Delta^\top + X_1 X_1^\top \label{set C form 1:C}\\
& \mb{B} : =  - \bmat{X_0\\ U_0} X_1^\top , \mb{A} : = \bmat{X_0\\ U_0}\bmat{X_0\\ U_0}^\top, \label{set C form 1:B A}
\end{align}
\end{subequations}
and this form can be obtained with algebraic computations from the definition of $\mathcal{C}$ in~\eqref{set C} by expressing $D$ in \eqref{set C} as $D= X_1 - A X_0 - B U_0$, substituting this $D$ in the condition defining $\mathcal{D}$ in~\eqref{set D}, and collecting $\bmat{I & A & B} = \bmat{I & Z^\top }$ to the left and its transpose to the right.
We make the next assumption on matrix $\smat{X_0\\ U_0}$.
\begin{assumption}
\label{ass:full row rank}
Matrix $\smat{X_0\\ U_0}$ has full row rank.
\end{assumption}
Full row rank of $\smat{X_0\\ U_0}$ is intimately related to persistence of excitation of the input and disturbance sequences, see a detailed discussion in \cite[\S 4.1]{bisoffiArXivPetersen}.
The rank condition can be verified directly from data and when it does not hold, one can typically enforce it simply by collecting more data points, thereby adding columns to $\smat{X_0\\ U_0}$.
By Assumption~\ref{ass:full row rank}, $\mb{A} = \smat{X_0\\ U_0} \smat{X_0\\ U_0}^\top \succ 0$. 
Thanks to $\mb{A} \succ 0$, we have the second form of the set $\mathcal{C}$ as
\begin{subequations}
\label{set C form 2}
\begin{align}
& \mathcal{C} = \big\{ \bmat{A & B} = Z^\top \! \colon (Z-Z_{\tu{c}})^\top \mb{A} (Z - Z_{\tu{c}}) \preceq \mb{Q} \big\} \\
& Z_{\tu{c}} := - \mb{A}^{-1} \mb{B}, \mb{Q} := \mb{B}^\top \mb{A}^{-1} \mb{B} - \mb{C}. \label{set C form 2:Zc and Q}
\end{align}
\end{subequations}
We noted before $\mb{A} \succ 0$; the sign definiteness of $\mb{Q}$ is also a structural property as claimed next.
\begin{lemma}
\label{lemma:signs A and Q}
Under Assumption~\ref{ass:full row rank}, $\mb{A} \succ 0$ and $\mb{Q} \succeq 0$.
\end{lemma}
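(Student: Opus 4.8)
The plan is to treat the two claims separately. The statement $\mb{A} \succ 0$ is immediate: under Assumption~\ref{ass:full row rank} the matrix $\smat{X_0\\ U_0}$ has full row rank, so for every nonzero $v$ one has $v^\top \mb{A} v = v^\top \smat{X_0\\ U_0}\smat{X_0\\ U_0}^\top v = \big\| \smat{X_0\\ U_0}^\top v \big\|^2 > 0$, i.e.\ $\mb{A} = \mb{A}^\top \succ 0$; this is what makes $\mb{A}^{-1}$, $Z_{\tu{c}}$ and $\mb{Q}$ well defined, and it is already noted in the paragraph preceding the statement. It remains to prove $\mb{Q} \succeq 0$.

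For $\mb{Q} \succeq 0$ I would argue directly. Write $W := \smat{X_0\\ U_0}$, so that $\mb{A} = WW^\top$, $\mb{B} = -W X_1^\top$, $\mb{C} = -\Delta\Delta^\top + X_1 X_1^\top$, and introduce $\Pi := W^\top \mb{A}^{-1} W$, the orthogonal projector onto the row space of $W$ (since $\mb{A} \succ 0$, one checks directly $\Pi = \Pi^\top$ and $\Pi^2 = \Pi$, hence $0 \preceq \Pi \preceq I$). Substituting into the definition of $\mb{Q}$ in~\eqref{set C form 2:Zc and Q} gives
\[
\mb{Q} = \mb{B}^\top \mb{A}^{-1}\mb{B} - \mb{C} = X_1 \Pi X_1^\top + \Delta\Delta^\top - X_1 X_1^\top = \Delta\Delta^\top - X_1 (I - \Pi) X_1^\top .
\]
The key observation is that $X_1(I-\Pi)$ is independent of $(A_\star,B_\star)$: from $X_1 = \bmat{A_\star & B_\star} W + D_0$ and $W(I-\Pi) = W - WW^\top \mb{A}^{-1} W = 0$ one gets $X_1(I-\Pi) = D_0(I-\Pi)$; then, using idempotency and symmetry of $I-\Pi$, $X_1(I-\Pi)X_1^\top = D_0 (I-\Pi)(I-\Pi)^\top D_0^\top = D_0(I-\Pi)D_0^\top \preceq D_0 D_0^\top$, the last inequality because $I - \Pi \preceq I$. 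Since $D_0 \in \mathcal{D}$ means $D_0 D_0^\top \preceq \Delta\Delta^\top$, it follows that $X_1(I-\Pi)X_1^\top \preceq \Delta\Delta^\top$ and therefore $\mb{Q} \succeq 0$.

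A shorter alternative avoids the explicit computation: since $D_0 \in \mathcal{D}$ we have $(A_\star,B_\star)\in\mathcal{C}$, so $\mathcal{C}$ is nonempty; by the second form~\eqref{set C form 2} of $\mathcal{C}$ (whose equivalence with~\eqref{set C} requires only $\mb{A}\succ 0$), there exists $Z$ with $(Z-Z_{\tu{c}})^\top \mb{A}(Z-Z_{\tu{c}}) \preceq \mb{Q}$, and the left-hand side is $\succeq 0$ because $\mb{A}\succ 0$, which forces $\mb{Q}\succeq 0$. I do not expect a genuine obstacle in either route: in the first, the only points to get right are the identity $W(I-\Pi)=0$ and the elementary properties of the projector $\Pi$ (symmetry, idempotency, $0\preceq\Pi\preceq I$), after which the noise bound $D_0 D_0^\top \preceq \Delta\Delta^\top$ closes the argument; in the second, one must only check that the passage from~\eqref{set C} to~\eqref{set C form 2} does not itself presuppose $\mb{Q}\succeq 0$ — it does not, being merely a completion of the square — so that the nonemptiness of $\mathcal{C}$ can legitimately be used. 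I would present the second route in the final write-up and keep the first computation as a remark.
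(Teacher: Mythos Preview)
Your proposal is correct. The paper itself omits the proof and defers to \cite[Lemma~1]{bisoffiArXivPetersen}; the appearance of the projector $\mb{Q}_{\tu{p}}$ in Remark~\ref{remark:different dist bound} indicates that the cited argument is essentially your first route, i.e., expressing $\mb{Q} = \Delta\Delta^\top - X_1(I-\Pi)X_1^\top$ via the orthogonal projector $\Pi = W^\top(WW^\top)^{-1}W$ and using $X_1(I-\Pi)=D_0(I-\Pi)$. Your second route---nonemptiness of $\mathcal{C}$ (because $(A_\star,B_\star)\in\mathcal{C}$) combined with the completed-square form~\eqref{set C form 2}, which needs only $\mb{A}\succ 0$---is a genuinely different and more elementary argument: it avoids any explicit computation with $X_1$ or $D_0$, and it carries over verbatim to the generalized disturbance model of Remark~\ref{remark:different dist bound}, whereas the projector route there requires the adjusted $\mb{Q}_{\tu{p}}$. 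The trade-off is that the projector computation is constructive (it exhibits $\mb{Q}$ as a difference of two comparable terms), while the nonemptiness argument is purely existential.
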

\begin{proof}
The lemma is the same as \cite[Lemma~1]{bisoffiArXivPetersen}, so the proof is omitted.
\end{proof}

With Lemma~\ref{lemma:signs A and Q}, we can give the third form of $\mathcal{C}$ as
\begin{equation}
\label{set C form 3}
\mathcal{C} = \{ Z_{\tu{c}} + \mb{A}^{-1/2} \Upsilon \mb{Q}^{1/2} \colon \Upsilon^\top \Upsilon \preceq I \}.
\end{equation}
The fact that the set $\mathcal{C}$ in \eqref{set C form 2} rewrites equivalently as in \eqref{set C form 3} is straightforward for $\mb{A} \succ 0$ and $\mb{Q} \succ 0$; it is less so for $\mb{A} \succ 0$ and $\mb{Q} \succeq 0$ and the proof for this case is in \cite[Prop.~1]{bisoffiArXivPetersen}.
The third form of $\mathcal{C}$ in~\eqref{set C form 3} is the one we typically need to obtain our main results in the sequel.

\begin{remark}
\label{remark:different dist bound}
Instead of~\eqref{set D}, one can consider for the disturbance sequence the bound given by
\begin{equation*}
\mathcal{D} := \{ D \in \real^{n \times T} \colon \bmat{I & D} \bmat{R & S^\top \\ S & Q} \bmat{I \\ D^\top} \preceq 0 \}
\end{equation*}
with matrices $R$ and $Q$ symmetric and $Q \succ 0$ \cite{berberich2019robust,vanwaarde2020noisy,berberich2020combining}. Because of $Q \succ 0$, $D$ cannot be too ``large''; so, as in~\eqref{set D}, the knowledge of $\mathcal{D}$ corresponds to knowing a bound on the energy of any disturbance sequence.
Moreover, one can obtain a set $\mathcal{C}$ analogous to~\eqref{set C form 1}, namely, $\mathcal{C}$ as in~\eqref{set C form 1:set only} with, instead of \eqref{set C form 1:C}-\eqref{set C form 1:B A},
\begin{align*}
& \mb{C} : =  R + X_1 S + S^\top X_1^\top + X_1 Q X_1^\top\\
& \mb{B} : =  - \bmat{X_0\\ U_0} (S +  Q X_1^\top) , \mb{A} : = \bmat{X_0\\ U_0}  Q \bmat{X_0\\ U_0}^\top,
\end{align*}
and still conclude from Assumption~\ref{ass:full row rank} and $Q\succ 0$ that $\mb{A} = \smat{X_0\\ U_0}  Q^{1/2} Q^{1/2} \smat{X_0\\ U_0}^\top \succ 0$.
Finally, Lemma~\ref{lemma:signs A and Q} remains valid for the different expressions of $\mb{A}$ and, consequently, of $\mb{Q}$ in~\eqref{set C form 2:Zc and Q} and it can be proven with the very same proof strategy of \cite[Lemma~1]{bisoffiArXivPetersen} as long as one substitutes the expression of $\mb{Q}_{\tu{p}}$ there with
\begin{equation*}
\mb{Q}_{\tu{p} } := Q^{1/2}  \smat{X_0\\ U_0}^\top \left( \smat{X_0\\ U_0}  Q \smat{X_0\\ U_0}^\top \right)^{-1} \smat{X_0\\ U_0} Q^{1/2}.
\end{equation*}
The rest of our result hold identically.
\end{remark}

\section{Sufficient condition for generic LMI regions}
\label{sec:suff cond}

In this section we consider generic LMI regions and, for them, look for data-based counterparts of Facts~\ref{fact:S-stab} and \ref{fact:inters}, which will result in the sufficient conditions of Theorem~\ref{thm:suff cond} given next and of Corollary~\ref{cor:suff cond} given later.

\begin{theorem}
\label{thm:suff cond}
Let Assumption~\ref{ass:full row rank} hold and $\mathcal{S}$ be an LMI region with data $(\alpha,\beta)$. \eqref{probl single LMI region} is feasible if the next program is feasible
\begingroup
\setlength{\arraycolsep}{1.5pt}
\begin{subequations}
\label{sol:single LMI region}
\begin{align}
& \hspace{-5pt} \text{find} & & \hspace{-5pt} P=P^\top \succ 0, Y \\
& \hspace{-5pt} \text{s.~t.} & & \hspace{-10pt}
\bmat{
\left\{
\begin{aligned}
& (\beta \beta^\top) \otimes \mb{Q} +  \alpha \otimes P \\
& \hspace*{12pt} + \Tr\big\{\beta \otimes \big( Z_{\tu{c}}^\top \smat{P \\ Y} \big) \big\}
\end{aligned}
\right\}
& \star \\
I_s \otimes  \smat{P \\ Y} &  - I_s \otimes \mb{A} } \prec 0 
\label{sol:single LMI region:lmi}
\end{align}
\end{subequations}%
\endgroup
If \eqref{sol:single LMI region} is feasible, the controller gain in~\eqref{probl single LMI region} is $K = Y P^{-1}$.
\end{theorem}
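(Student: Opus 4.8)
The plan is to use the third representation \eqref{set C form 3} of $\mathcal{C}$, which parametrizes every consistent $(A,B)$ by a single matrix $\Upsilon$ with $\Upsilon^\top\Upsilon\preceq I$, turn the robust inequality in \eqref{probl single LMI region} into one matrix inequality in $\Upsilon$, over-approximate the $\Upsilon$-dependent part, and recognize the outcome as the Schur complement of \eqref{sol:single LMI region:lmi}.

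First I would fix a feasible pair $(P,Y)$ of \eqref{sol:single LMI region} and set $K:=YP^{-1}$ (well posed since $P\succ0$), so that $Y=KP$ and, for any $(A,B)$, $(A+BK)P=\bmat{A&B}\bmat{P\\Y}=Z^\top\bmat{P\\Y}$ with $Z^\top=\bmat{A&B}$. By \eqref{set C form 3}, every $(A,B)\in\mathcal{C}$ has $Z^\top=Z_{\tu{c}}^\top+\mb{Q}^{1/2}\Upsilon^\top\mb{A}^{-1/2}$ for some $\Upsilon$ with $\Upsilon^\top\Upsilon\preceq I$. Plugging this into the characteristic matrix \eqref{charact matrix}, using the mixed-product rule for $\otimes$, and collecting the $\Upsilon$-free terms, I obtain
\[
M_{\mathcal{S}}\big((A+BK),P\big)=N+\mathcal{F}\mathcal{W}\mathcal{G}+\mathcal{G}^\top\mathcal{W}^\top\mathcal{F}^\top ,
\]
with $N:=\alpha\otimes P+\Tr\{\beta\otimes(Z_{\tu{c}}^\top\bmat{P\\Y})\}$, $\mathcal{F}:=\beta\otimes\mb{Q}^{1/2}$, $\mathcal{G}:=I_s\otimes(\mb{A}^{-1/2}\bmat{P\\Y})$, $\mathcal{W}:=I_s\otimes\Upsilon^\top$. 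Two facts make the rest go through: $N+(\beta\beta^\top)\otimes\mb{Q}$ is exactly the braced $(1,1)$ block of \eqref{sol:single LMI region:lmi}, and $\mathcal{W}\mathcal{W}^\top=I_s\otimes(\Upsilon^\top\Upsilon)\preceq I$ because the Kronecker product preserves the semidefinite order.

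Next I would bound the indefinite part: from $XY^\top+YX^\top\preceq XX^\top+YY^\top$ (with $X=\mathcal{F}\mathcal{W}$, $Y=\mathcal{G}^\top$) together with $\mathcal{W}\mathcal{W}^\top\preceq I$,
\begin{align*}
\mathcal{F}\mathcal{W}\mathcal{G}+\mathcal{G}^\top\mathcal{W}^\top\mathcal{F}^\top
&\preceq\mathcal{F}\mathcal{F}^\top+\mathcal{G}^\top\mathcal{G}\\
&=(\beta\beta^\top)\otimes\mb{Q}+I_s\otimes\big(\bmat{P&Y^\top}\mb{A}^{-1}\bmat{P\\Y}\big),
\end{align*}
uniformly in $\Upsilon$; hence, for every $(A,B)\in\mathcal{C}$,
\begin{align*}
M_{\mathcal{S}}\big((A+BK),P\big)&\preceq N+(\beta\beta^\top)\otimes\mb{Q}\\
&\quad+I_s\otimes\big(\bmat{P&Y^\top}\mb{A}^{-1}\bmat{P\\Y}\big)=:\Pi .
\end{align*}
Since $\mb{A}\succ0$ by Lemma~\ref{lemma:signs A and Q}, a Schur complement of \eqref{sol:single LMI region:lmi} with respect to its $(2,2)$ block $-I_s\otimes\mb{A}$ shows that feasibility of \eqref{sol:single LMI region} is equivalent to $\Pi\prec0$. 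Combining the two, $M_{\mathcal{S}}((A+BK),P)\prec0$ for all $(A,B)\in\mathcal{C}$, i.e.\ $(P,K)$ with $K=YP^{-1}$ solves \eqref{probl single LMI region}.

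The Kronecker bookkeeping and the dimension checks are routine. The substantive steps are the over-approximation of the $\Upsilon$-term by $\mathcal{F}\mathcal{F}^\top+\mathcal{G}^\top\mathcal{G}$ — this scaling-free Young-type bound, which also ignores that $\mathcal{W}$ is block-structured, is precisely what makes the condition sufficient rather than necessary in general — and the identification of the resulting inequality with the Schur complement of \eqref{sol:single LMI region:lmi}. I expect the main obstacle to be checking carefully that inserting $Z^\top=Z_{\tu{c}}^\top+\mb{Q}^{1/2}\Upsilon^\top\mb{A}^{-1/2}$ (and its transpose) into the two copies that appear in $M_{\mathcal{S}}$ collects exactly into the symmetric form $\mathcal{F}\mathcal{W}\mathcal{G}+\mathcal{G}^\top\mathcal{W}^\top\mathcal{F}^\top$, since one must track both the Kronecker factors and the transposed terms simultaneously.
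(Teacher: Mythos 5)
Your proof is correct: the identification $M_{\mathcal{S}}((A+BK),P)=N+\mathcal{F}\mathcal{W}\mathcal{G}+\mathcal{G}^\top\mathcal{W}^\top\mathcal{F}^\top$ with $\mathcal{W}=I_s\otimes\Upsilon^\top$ checks out via the mixed-product rule, $\mathcal{W}\mathcal{W}^\top=I_s\otimes(\Upsilon^\top\Upsilon)\preceq I$, and the Schur complement of \eqref{sol:single LMI region:lmi} is exactly your $\Pi$. However, you take a genuinely different route from the paper. The paper first applies the Schur complement, then introduces a multiplier $\lambda>0$ (absorbed into $P,Y$ by homogeneity) and invokes Petersen's lemma to establish that \eqref{sol:single LMI region:lmi} is \emph{equivalent} to robust negativity against the full unstructured uncertainty $\mb{Y}^\top\mb{Y}\preceq I_{sn}$; sufficiency then follows by restricting to the block-diagonal $\mb{Y}=I_s\otimes\Upsilon$. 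You instead go directly from the LMI to the robust condition over $\mathcal{C}$ with a single completion-of-squares (Young) bound $\mathcal{F}\mathcal{W}\mathcal{G}+\mathcal{G}^\top\mathcal{W}^\top\mathcal{F}^\top\preceq\mathcal{F}\mathcal{F}^\top+\mathcal{G}^\top\mathcal{G}$, which is shorter and avoids Petersen's lemma altogether. What the paper's detour buys is a precise localization of the conservatism: since the LMI is shown equivalent to robustness against the full uncertainty class, the \emph{only} gap to necessity is the full-versus-block-diagonal step, which is exactly the discussion around \eqref{for all full}--\eqref{for all block} and the lever exploited in Theorem~\ref{thm:nec and suff cond} when $\beta$ has rank one. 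One small correction to your closing remark: fixing the Young bound ``scaling-free'' (i.e., $\lambda=1$) costs nothing at the level of feasibility of the \emph{program}, because the constraint is homogeneous in $(P,Y)$ and any multiplier can be absorbed by rescaling them; the genuine source of conservatism is only that $\mathcal{W}$ is block-structured while the bound is tight only over all $\mathcal{W}$ with $\mathcal{W}\mathcal{W}^\top\preceq I$.
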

\begin{proof}
The proof shows how to enforce the condition of Fact~\ref{fact:S-stab} for all matrices $(A,B) \in \mathcal{C}$. Apply Schur complement to~\eqref{sol:single LMI region:lmi} and obtain equivalently, since $\mb{A} \succ 0$,
\begin{align*}
& (\beta \beta^\top) \otimes \mb{Q} +  \alpha \otimes P + \Tr\big\{\beta \otimes \big( Z_{\tu{c}}^\top \smat{P \\ Y} \big) \big\} \\
& \hspace*{5pt}  + 
\big( I_s \otimes  \smat{P \\ Y}^\top \big) 
\big( I_s \otimes \mb{A}^{-1} \big)
\big( I_s \otimes  \smat{P \\ Y} \big)
\prec 0.
\end{align*}
Equivalently, multiply both sides by any $\lambda >0$ and then replace $\lambda P$, $\lambda Y$ with $P$, $Y$ to obtain
\begin{align}
& \alpha \otimes P + \Tr\big\{ \beta \otimes \big( Z_{\tu{c}}^\top \smat{P \\ Y} \big) \big\} \notag \\
& + \frac{1}{\lambda} \Big( I_s \otimes  \big( \smat{P \\ Y}^\top \mb{A}^{-1/2}  \big) \Big) \Big( I_s \otimes  \big( \mb{A}^{-1/2} \smat{P \\ Y} \big) \Big) \notag \\
& + \lambda \Big( \beta \otimes \mb{Q}^{1/2} \Big) \Big(\beta^\top \otimes \mb{Q}^{1/2}\Big)
\prec 0. \label{sol:sLMIr:lmi after schur}
\end{align}
We use the so-called Petersen's lemma \cite{petersen1987stabilization} in the version reported in \cite[Fact~1]{bisoffiArXivPetersen}.
The existence of $\lambda > 0$ such that \eqref{sol:sLMIr:lmi after schur} holds is equivalent by \cite[Fact~1]{bisoffiArXivPetersen} to
\begin{align}
& \alpha \otimes P +  \Tr \big\{ \beta \otimes \big( Z_{\tu{c}}^\top \smat{P \\ Y} \big) \big\}  \notag \\
& + \Tr \Big\{ \big( \beta \otimes \mb{Q}^{1/2} \big) \mb{Y}^\top \Big( I_s \otimes  \big( \mb{A}^{-1/2} \smat{P \\ Y} \big) \Big) \Big\} \notag\\
& \prec 0 \quad \forall \mb{Y} \colon \mb{Y}^\top \mb{Y} \preceq I_{sn} \label{sol:sLMIr:lmi full Y}
\end{align}
With $\Upsilon$ as in~\eqref{set C form 3}, i.e., $\Upsilon^\top \Upsilon \preceq I_n$, consider the block-diagonal matrix $\mb{Y} = I_s \otimes \Upsilon$ with $s$ blocks. 
$\mb{Y}^\top \mb{Y} = \smat{\Upsilon^\top \Upsilon &  & 0\\  & \rotatebox{-45}{\resizebox{8pt}{!}{\ldots}} & \\ 0 &  & \Upsilon^\top \Upsilon } \preceq \smat{I_n & & 0\\  & \rotatebox{-45}{\resizebox{8pt}{!}{\ldots}} & \\ 0 & & I_n} = I_{sn}$ so \eqref{sol:sLMIr:lmi full Y} implies for the selected block-diagonal $\mb{Y}$ that
\begin{align}
& 0 \succ \alpha \otimes P + \Tr \big\{ \beta \otimes \big( Z_{\tu{c}}^\top \smat{P \\ Y} \big) \big\}  \notag \\
& + \Tr \Big\{ \big( \beta \otimes \mb{Q}^{1/2} \big) \big( I_s \otimes \Upsilon^\top \big) \Big( I_s \otimes  \big( \mb{A}^{-1/2} \smat{P \\ Y} \big) \Big) \Big\} \notag\\
& = \alpha \otimes P + \Tr \big\{ \beta \otimes \big( Z_{\tu{c}}^\top \smat{P \\ Y} \big) \big\} \notag \\
& \quad + \Tr \Big\{  \beta \otimes \Big( \mb{Q}^{1/2}  \Upsilon^\top \mb{A}^{-1/2} \smat{P \\ Y} \Big) \Big\} \notag\\
& = \alpha \otimes P + \Tr \Big\{  \beta \otimes \Big( \big( Z_{\tu{c}} + \mb{A}^{-1/2} \Upsilon \mb{Q}^{1/2} \big)^\top \smat{P \\ Y}  \Big) \Big\} \notag \\
& \hspace*{20pt} \forall \Upsilon \colon \Upsilon^\top \Upsilon \preceq I_n.
\label{sol:sLMIr:lmi implies}
\end{align}
Equivalently, we have by~\eqref{set C form 3} that
\begin{align}
& 0 \succ \alpha \otimes P + \Tr \big\{  \beta \otimes \big( \smat{A & B} \smat{P \\ Y}  \big) \big\} \quad \forall (A,B) \in \mathcal{C}. \label{sol:sLMIr:lmi (A,B)}
\end{align}
In summary, feasibility of \eqref{sol:single LMI region} implies feasibility of
\begin{equation*}
\text{find } P = P^\top \succ 0, Y \text{ subject to } \eqref{sol:sLMIr:lmi (A,B)}.
\end{equation*}
Feasibility of this problem is equivalent to feasibility of \eqref{probl single LMI region} by the standard change of variables given by $Y=KP$.
\end{proof}
\smallskip

The feasibility program in~\eqref{sol:single LMI region} is convenient since the constraint \eqref{sol:single LMI region:lmi} is a linear matrix inequality in the decision variables $P$, $Y$.
With respect to the model-based condition in Fact~\ref{fact:S-stab}, Theorem~\ref{thm:suff cond} no longer gives a necessary and sufficient condition because, from~\eqref{sol:sLMIr:lmi full Y} to \eqref{sol:sLMIr:lmi implies} in the proof, we used that for matrices $\mb{D} = \mb{D}^\top$, $\mb{E}$, $\mb{G}$
\begin{subequations}
\begin{equation}
\label{for all full}
0 \succ \mb{D} + \mb{E} \mb{F} \mb{G} + \mb{G}^\top \mb{F}^\top \mb{E}^\top \quad \forall \mb{F} \colon \mb{F}^\top \mb{F} \preceq I
\end{equation}
implies
\begin{equation}
\label{for all block}
0 \succ \mb{D} + \mb{E} \big( I \otimes \mb{f} \big) \mb{G} + \mb{G}^\top \big( I \otimes \mb{f}^\top \big) \mb{E}^\top \,\, \forall \mb{f} \colon \mb{f}^\top \mb{f} \preceq I,
\end{equation}
\end{subequations}
but is not implied by it in general\footnote{
Take $\mb{D} = - I$, $\mb{E} = \smat{1 & 0\\ 1 & 0}$, $\mb{G} = \smat{0 & 0\\ 1 & 1}$.
$\mb{D} + \mb{E} \smat{\mb{f} & 0\\ 0 & \mb{f} } \mb{G} + \mb{G}^\top \smat{\mb{f} & 0\\ 0 & \mb{f}}  \mb{E}^\top = \mb{D} \prec 0 $ for all $\mb{f} \in \real$.
On the other hand, take $\mb{F} = \smat{ 0 & 1\\ 0 & 0}$, which satisfies $\mb{F}^\top \mb{F} \preceq I$; for such $\mb{F}$, $\mb{D} + \mb{E} \mb{F} \mb{G} + \mb{G}^\top \mb{F}^\top \mb{E}^\top = \smat{1 & 2 \\ 2 & 1}$, which is not negative definite. 
This shows that \eqref{for all block} does not imply \eqref{for all full}.}.
The larger the number $s$ of blocks $\mb{f}$ on the diagonal is, the sparser the matrix $I \otimes \mb{f}$ (of unit norm) is, the more conservative it is to replace that matrix with the full matrix $\mb{F}$ (of unit norm).
Hence, the chances of feasibility decrease with the dimension $s$ of the LMI region considered in Theorem~\ref{thm:suff cond}.
From Theorem~\ref{thm:suff cond}, which is the data-based counterpart of Fact~\ref{fact:S-stab}, we obtain the next data-based counterpart of Fact~\ref{fact:inters}.
\begin{corollary}
\label{cor:suff cond}
Let Assumption~\ref{ass:full row rank} hold and $\mathcal{S}_i$, for $i = 1, \dots , r$, be an LMI region with data $(\alpha_i, \beta_i)$. 
\eqref{probl multiple LMI regions} is feasible if the next program is feasible
\begin{subequations}
\label{sol:multiple LMI regions}
\begin{align}
& \hspace{-5pt} \text{find} & & \hspace{-5pt} P=P^\top \succ 0, Y\\
& \hspace{-5pt} \text{s.~t.} & & \hspace{-10pt} 
\bmat{
\left\{
\begin{aligned}
& (\beta_i \beta_i^\top) \otimes \mb{Q} +  \alpha_i \otimes P \\
& \hspace*{7pt} + \Tr\big\{\beta_i \otimes \big( Z_{\tu{c}}^\top \smat{P \\ Y} \big) \big\}
\end{aligned}
\right\}
& \star \\
I_s \otimes  \smat{P \\ Y}  &  - I_s \otimes \mb{A} } \prec 0  \notag\\
& & & \text{ for } i = 1,\dots, r.
\label{sol:multiple LMI regions:lmi}
\end{align}
\end{subequations}%
If \eqref{sol:multiple LMI regions} is feasible, the controller gain in~\eqref{probl multiple LMI regions} is $K = Y P^{-1}$.
\end{corollary}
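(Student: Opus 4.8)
The plan is to reduce Corollary~\ref{cor:suff cond} entirely to $r$ independent applications of the argument proving Theorem~\ref{thm:suff cond}, exploiting that the decision variables $P$ and $Y$ are shared by all $r$ constraints in~\eqref{sol:multiple LMI regions}. Concretely, suppose $(P, Y)$ with $P = P^\top \succ 0$ is feasible for~\eqref{sol:multiple LMI regions}. I fix an index $i \in \{1, \dots, r\}$ and observe that the $i$-th inequality in~\eqref{sol:multiple LMI regions:lmi} is the same as~\eqref{sol:single LMI region:lmi} up to replacing $(\alpha, \beta)$ by $(\alpha_i, \beta_i)$, and with the same $(P, Y)$. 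Running the chain of steps from the proof of Theorem~\ref{thm:suff cond} on this inequality — Schur complement against $- I_s \otimes \mb{A} \prec 0$, Petersen's lemma in the form of \cite[Fact~1]{bisoffiArXivPetersen}, restriction of the uncertainty block to the block-diagonal structure $\mb{Y} = I_s \otimes \Upsilon$, and substitution of the parametrization~\eqref{set C form 3} of $\mathcal{C}$ — then yields the analogue of~\eqref{sol:sLMIr:lmi (A,B)} with $(\alpha_i, \beta_i)$ in place of $(\alpha, \beta)$, that is, $M_{\mathcal{S}_i}\big((A+BK), P\big) \prec 0$ for all $(A,B) \in \mathcal{C}$ with $K := Y P^{-1}$.

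The only step requiring a word of care is the scaling ``multiply both sides by any $\lambda > 0$ and replace $\lambda P$, $\lambda Y$ with $P$, $Y$'' used inside the proof of Theorem~\ref{thm:suff cond}: since the same $P$ and $Y$ must now serve all $r$ regions simultaneously, one cannot rescale $(P, Y)$ region by region. This is harmless, because the Schur complement of the $i$-th inequality in~\eqref{sol:multiple LMI regions:lmi} is already exactly~\eqref{sol:sLMIr:lmi after schur} written with $(\alpha_i, \beta_i)$ for the particular choice $\lambda = 1$, which in turn certifies the existence of the $\lambda > 0$ needed to invoke \cite[Fact~1]{bisoffiArXivPetersen}; equivalently, the candidate gain $K = Y P^{-1}$ is invariant under $(P, Y) \mapsto (\lambda P, \lambda Y)$, so any per-region rescaling is immaterial for the conclusion on $K$.

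Since the conclusion of the first paragraph holds for every $i = 1, \dots, r$ with one and the same $P = P^\top \succ 0$ and $K = Y P^{-1}$, these $P$ and $K$ solve~\eqref{probl multiple LMI regions}; in particular $P$ is a common Lyapunov-like matrix for all $r$ characteristic-matrix conditions. Hence feasibility of~\eqref{sol:multiple LMI regions} implies feasibility of~\eqref{probl multiple LMI regions}, with $K = Y P^{-1}$ a valid controller gain. The essentially unique obstacle is the bookkeeping just described — making sure the Petersen-lemma/rescaling step of Theorem~\ref{thm:suff cond} is carried out without breaking the commonality of $(P, Y)$ across the $r$ regions; everything else is a verbatim repetition of the single-region argument, which is exactly the sense in which Corollary~\ref{cor:suff cond} relates to Theorem~\ref{thm:suff cond} as Fact~\ref{fact:inters} relates to Fact~\ref{fact:S-stab}.
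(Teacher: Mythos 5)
Your proposal is correct and follows essentially the same route as the paper, which simply rewrites~\eqref{probl multiple LMI regions} as the conjunction of $r$ single-region conditions over $\mathcal{C}$ and invokes Theorem~\ref{thm:suff cond} for each $i$ with the shared $(P,Y)$. Your extra care about the $\lambda$-rescaling is well placed but resolves exactly as you say: for the sufficiency direction $\lambda=1$ already certifies the hypothesis of Petersen's lemma for each region with the common $(P,Y)$, so no per-region rescaling is ever needed.
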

\begin{proof}
\eqref{probl multiple LMI regions} is equivalent, by definition of characteristic matrix in~\eqref{charact matrix}, to
\begingroup
\thinmuskip=0.3mu plus 1mu
\medmuskip=1mu plus 2mu
\thickmuskip=1.5mu plus 3mu
\begin{align*}
& \text{find} & & \hspace*{-7pt} P = P^\top \succ 0, K \\
& & &  \hspace*{-7pt} \alpha_1 \otimes P + \Tr\big\{ \beta_1 \otimes \big( (A + B K ) P \big) \big\} \prec 0 \quad \forall (A,B) \in \mathcal{C} \\
& & & \hspace*{-7pt} \hspace{155pt} \rotatebox{90}{\resizebox{10pt}{!}{\ldots}} \\
& & & \hspace*{-7pt} \alpha_r \otimes P + \Tr\big\{ \beta_r \otimes \big( (A + B K ) P \big) \big\} \prec 0 \quad \forall (A,B) \in \mathcal{C}.
\end{align*}
\endgroup
\eqref{sol:multiple LMI regions} implies feasibility of this program by Theorem~\ref{thm:suff cond}.
\end{proof}

\section{Necessary and sufficient condition for special LMI regions}
\label{sec:nec and suff cond}

We have seen in Section~\ref{sec:suff cond} sufficient conditions for data-driven stabilization within generic LMI regions and their intersections.
In this section we show that necessary and sufficient conditions can be found for special LMI regions and their intersections.
We will show that these are vertical halfplanes, disks centered on the real axis and intersections of such halfplanes and disks, and that these regions can inner-approximate subsets of the complex plane of practical interest.

In Section~\ref{sec:suff cond}, the source of conservatism leading to sufficient conditions was substituting the block diagonal uncertainty $I \otimes \mb{f}$ in~\eqref{for all block} with the full uncertainty $\mb{F}$ in~\eqref{for all full}.
A special case when no conservatism is introduced is when $I \otimes \mb{f} = 1 \otimes \mb{f}$, and this occurs for $\beta$ of rank $1$ as we now show.
By considering at~\eqref{sol:sLMIr:lmi implies} in the proof of Theorem~\ref{thm:suff cond} and setting $\mb{D}:=\alpha \otimes P + \Tr \big\{ \beta \otimes \big( Z_{\tu{c}}^\top \smat{P \\ Y} \big) \big\}$, we have
\begin{align}
& 0 \succ \mb{D} + \Tr \Big\{  \beta \otimes \Big( \mb{Q}^{1/2}  \Upsilon^\top \mb{A}^{-1/2} \smat{P \\ Y} \Big) \Big\} \notag\\
& \hspace*{140pt} \forall \Upsilon \colon \Upsilon^\top \Upsilon \preceq I;
\label{point with conservatism}
\end{align}
hence, if $\beta = \eta \cdot 1 \cdot \gamma^\top$ for some vectors $\eta$ and $\gamma$ in $\real^s$, \eqref{point with conservatism} becomes that for all $\Upsilon$ such that $\Upsilon^\top \Upsilon \preceq I$,
\begin{align*}
& 0 \succ \mb{D} + \Tr \big\{ (\eta \otimes \mb{Q}^{1/2})(1 \otimes \Upsilon^\top)(\gamma^\top \otimes \mb{A}^{-1/2} \smat{P \\ Y}) \big\},
\end{align*}
where $1 \otimes \Upsilon^\top$ appears as we intended to show.
This discussion is summarized in the next assumption.
\begin{assumption}
\label{ass:rank 1}
Let $\mathcal{S}$ be an LMI region with data $(\alpha,\beta)$.
For $\beta \in \real^{s \times s}$, there exist $\eta$ and $\gamma$ in $\real^s$ such that $\beta= \eta \gamma^\top$, i.e., 
$\smat{\beta_{11} & \dots & \beta_{1s} \\ \rotatebox[origin=c]{90}{{\resizebox{8pt}{!}{\ldots}}}  &  & \rotatebox[origin=c]{90}{{\resizebox{8pt}{!}{\ldots}}} \\ \beta_{s1} & \dots & \beta_{ss} } 
= 
\smat{\eta_1 \gamma_1 & \dots & \eta_1 \gamma_s \\ \rotatebox[origin=c]{90}{{\resizebox{8pt}{!}{\ldots}}}  &  & \rotatebox[origin=c]{90}{{\resizebox{8pt}{!}{\ldots}}} \\ 
\eta_s \gamma_1 & \dots & \eta_s \gamma_s}$.
\end{assumption}

A simple exemplification of this assumption follows.

\begin{example}
\label{example:assumption rank 1}
An open disk with center $(x_{\tu{d}},0)$ and radius $r_{\tu{d}}>0$ has data $(\alpha_{\tu{d}},\beta_{\tu{d}})$ with $\alpha_{\tu{d}} :=\smat{-r_{\tu{d}} & x_{\tu{d}} \\  x_{\tu{d}}  & -r_{\tu{d}}}$ and $\beta_{\tu{d}}:=\smat{0 & 0\\ -1 & 0}$, and satisfies Assumption~\ref{ass:rank 1} with $\eta_{\tu{d}}:=\smat{0\\ -1}$ and $\gamma_{\tu{d}}:=\smat{1\\ 0}$.
\end{example}
 
For an LMI region $\mathcal{S}$ satisfying Assumption~\ref{ass:rank 1}, the program \eqref{probl single LMI region} is reformulated \emph{equivalently} as in the next result.
\begin{theorem}
\label{thm:nec and suff cond}
Let Assumption~\ref{ass:full row rank} hold and $\mathcal{S}$ be an LMI region with data $(\alpha, \beta)$ satisfying Assumption~\ref{ass:rank 1}, i.e., $\beta = \eta \gamma^\top$ for some $\eta$ and $\gamma$ in $\real^s$. 
Then, \eqref{probl single LMI region} is feasible if and only if the next program is feasible 
\begingroup
\setlength{\arraycolsep}{1.5pt}
\thinmuskip=0.3mu plus 1mu
\medmuskip=1mu plus 2mu
\thickmuskip=1.5mu plus 3mu
\begin{subequations}
\label{sol rank 1:single LMI region}
\begin{align}
& \hspace{-5pt} \text{find} & & \hspace{-5pt} P=P^\top \succ 0, Y\\
& \hspace{-5pt} \text{s.~t.} & & \hspace{-10pt}
\bmat{
\left\{
\begin{aligned}
& (\eta \otimes I_n) \mb{Q} (\eta^\top \otimes I_n) + \alpha \otimes P    \\
& \hspace*{3pt} + \Tr\big\{  (\eta \otimes I_n) Z_{\tu{c}}^\top \big(\gamma^\top \otimes \smat{P \\ Y} \big) \big\}
\end{aligned}
\right\}
& \star \\
\gamma^\top \otimes  \smat{P \\ Y}  &  -\mb{A}} \prec 0 \label{sol rank 1:single LMI region:lmi}
\end{align}
\end{subequations}%
\endgroup
If \eqref{sol rank 1:single LMI region} is feasible, the controller gain in~\eqref{probl single LMI region} is $K = Y P^{-1}$.
\end{theorem}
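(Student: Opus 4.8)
The plan is to mirror the proof of Theorem~\ref{thm:suff cond} but to exploit the rank-one structure $\beta = \eta\gamma^\top$ to replace the block-diagonal uncertainty $I_s\otimes\Upsilon$ with the single block $1\otimes\Upsilon = \Upsilon$, which is precisely the step identified after Assumption~\ref{ass:rank 1} as removing all conservatism. Concretely, I would first apply the Schur complement to~\eqref{sol rank 1:single LMI region:lmi}, using $\mb{A}\succ0$ from Lemma~\ref{lemma:signs A and Q}, to obtain an equivalent inequality of the form
\begin{equation*}
\alpha\otimes P + \Tr\big\{(\eta\otimes I_n)Z_{\tu{c}}^\top(\gamma^\top\otimes\smat{P\\Y})\big\} + (\eta\otimes I_n)\mb{Q}(\eta^\top\otimes I_n) + (\gamma\otimes\smat{P\\Y}^\top)\mb{A}^{-1}(\gamma^\top\otimes\smat{P\\Y}) \prec 0.
\end{equation*}
Then, as in~\eqref{sol:sLMIr:lmi after schur}, I would homogenize: multiply by $\lambda>0$ and rescale $(P,Y)\mapsto(\lambda P,\lambda Y)$ so that the two quadratic-in-uncertainty terms appear with coefficients $\lambda$ and $1/\lambda$, writing $\mb{Q}=\mb{Q}^{1/2}\mb{Q}^{1/2}$ and $\mb{A}^{-1}=\mb{A}^{-1/2}\mb{A}^{-1/2}$ (both valid by Lemma~\ref{lemma:signs A and Q}).

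The second step is to invoke Petersen's lemma in the version \cite[Fact~1]{bisoffiArXivPetersen}, exactly as in the proof of Theorem~\ref{thm:suff cond}, to convert ``there exists $\lambda>0$ such that the homogenized inequality holds'' into the equivalent robust statement: for all $\Upsilon$ with $\Upsilon^\top\Upsilon\preceq I_n$,
\begin{equation*}
\alpha\otimes P + \Tr\big\{(\eta\otimes I_n)Z_{\tu{c}}^\top(\gamma^\top\otimes\smat{P\\Y})\big\} + \Tr\big\{(\eta\otimes\mb{Q}^{1/2})\Upsilon^\top(\gamma^\top\otimes(\mb{A}^{-1/2}\smat{P\\Y}))\big\} \prec 0.
\end{equation*}
The point is that here the uncertainty block is the \emph{single} matrix $\Upsilon\in\real^{p\times n}$ — not $I_s\otimes\Upsilon$ — because $\beta=\eta\cdot1\cdot\gamma^\top$ contributes the scalar $1$ in the Kronecker position, so the reduction from \eqref{for all full} to \eqref{for all block} is an equality rather than a one-way implication. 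Using Kronecker-product identities to regroup $(\eta\otimes\mb{Q}^{1/2})\Upsilon^\top(\gamma^\top\otimes(\mb{A}^{-1/2}\smat{P\\Y})) = \eta\gamma^\top\otimes(\mb{Q}^{1/2}\Upsilon^\top\mb{A}^{-1/2}\smat{P\\Y}) = \beta\otimes(\mb{Q}^{1/2}\Upsilon^\top\mb{A}^{-1/2}\smat{P\\Y})$, and absorbing the $Z_{\tu{c}}^\top$ term the same way, the inequality becomes
\begin{equation*}
\alpha\otimes P + \Tr\big\{\beta\otimes\big((Z_{\tu{c}}+\mb{A}^{-1/2}\Upsilon\mb{Q}^{1/2})^\top\smat{P\\Y}\big)\big\}\prec0\quad\forall\,\Upsilon:\Upsilon^\top\Upsilon\preceq I_n,
\end{equation*}
which by the third form of $\mathcal{C}$ in~\eqref{set C form 3} is exactly $\alpha\otimes P+\Tr\{\beta\otimes(\smat{A&B}\smat{P\\Y})\}\prec0$ for all $(A,B)\in\mathcal{C}$. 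The standard substitution $Y=KP$ then gives equivalence with~\eqref{probl single LMI region}, and $K=YP^{-1}$.

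The one point demanding care — and the main obstacle to writing the argument cleanly rather than a genuine difficulty — is bookkeeping the Kronecker reshuffling so that the rank-one factorization is actually used at the right place: one must verify that in~\eqref{sol rank 1:single LMI region:lmi} the off-diagonal block $\gamma^\top\otimes\smat{P\\Y}$ (with $\gamma$, not $\eta$) and the lower-right block $-\mb{A}$ (with no Kronecker factor on the $I_s$ side) are the correct Schur-complement pieces matching the homogenized form, and that Petersen's lemma is applied with $\mb{E}=\eta\otimes\mb{Q}^{1/2}$, $\mb{G}=\gamma^\top\otimes(\mb{A}^{-1/2}\smat{P\\Y})$, $\mb{F}=\Upsilon$. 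Since $\Upsilon$ is a single unconstrained block of the right dimension, Petersen's lemma applies with no relaxation, so every step is an ``if and only if'' — this is what upgrades the sufficient condition of Theorem~\ref{thm:suff cond} to the necessary-and-sufficient condition claimed here.
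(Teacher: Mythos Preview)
Your proposal is correct and follows essentially the same route as the paper's own proof: Schur complement via $\mb{A}\succ0$, homogenization with a scalar $\lambda$, Petersen's lemma applied with the single uncertainty block $\Upsilon$ (made possible by $\beta=\eta\gamma^\top$), regrouping via Kronecker identities back to $\beta\otimes(\cdot)$, and identification with $\mathcal{C}$ through~\eqref{set C form 3} before the change of variables $Y=KP$. The only cosmetic slip is the stated dimension of $\Upsilon$ (it lives in $\real^{(n+m)\times n}$, not $\real^{p\times n}$), but this does not affect the argument.
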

\begin{proof}
By Schur complement and $\mb{A} \succ 0$ by Assumption~\ref{ass:full row rank}, \eqref{sol rank 1:single LMI region:lmi} is equivalent to
\begin{align*}
& 0 \succ \alpha \otimes P + \Tr\big\{  (\eta \otimes I_n) Z_{\tu{c}}^\top \big(\gamma^\top \otimes \smat{P \\ Y} \big) \big\} \\
& \hspace*{-2pt} + (\eta \otimes I_n) \mb{Q} (\eta^\top \otimes I_n) + \big( \gamma \otimes  \smat{P \\ Y}^\top  \big) \mb{A}^{-1} \big( \gamma^\top \otimes \smat{P \\ Y} \big)
\end{align*}
Equivalently, multiply by any $\lambda > 0$ both sides and then replace $\lambda P$, $\lambda Y$ with $P$, $Y$ to obtain
\begin{align}
& 0 \succ \alpha \otimes P + \Tr\big\{ (\eta \otimes I) Z_{\tu{c}}^\top \big(\gamma^\top \otimes \smat{P \\ Y} \big) \big\} \notag \\
& + \lambda ( \eta \otimes I_n ) \mb{Q}^{1/2} \mb{Q}^{1/2} ( \eta^\top \otimes I_n ) \notag \\
& + \frac{1}{\lambda} \big( \gamma \otimes  \smat{P \\ Y}^\top \big) \mb{A}^{-1/2} \mb{A}^{-1/2}   \big( \gamma^\top \otimes  \smat{P \\ Y} \big) . \label{sol rank 1:sLMIr:after Schur}
\end{align}
By Petersen's lemma \cite{petersen1987stabilization} in the version \cite[Fact~1]{bisoffiArXivPetersen}, the existence of $\lambda > 0$ such that \eqref{sol rank 1:sLMIr:after Schur} holds is equivalent to
\begin{align*}
& 0 \succ \alpha \otimes P + \Tr\big\{ (\eta \otimes I_n) Z_{\tu{c}}^\top \big(\gamma^\top \otimes \smat{P \\ Y} \big) \big\} \notag \\
& + \Tr \Big\{ ( \eta \otimes I_n ) \mb{Q}^{1/2} \Upsilon^\top  \mb{A}^{-1/2}   \big( \gamma^\top \otimes \smat{P \\ Y} \big) \Big\} \\
& = \alpha \otimes P + \Tr\Big\{ (\eta \otimes I_n) \Big( Z_{\tu{c}}^\top \\
& \hspace*{20pt} + \mb{Q}^{1/2} \Upsilon^\top  \mb{A}^{-1/2} \Big) \big( \gamma^\top \otimes \smat{P \\ Y} \big) \Big\} \,\, \forall \Upsilon \colon \Upsilon^\top \Upsilon \preceq I_n.
\end{align*}
This condition is equivalent, by~\eqref{set C form 3}, to
\begin{align*}
& 0 \succ  \alpha \otimes P + \Tr\Big\{ (\eta \otimes I_n) \smat{A & B} \big(\gamma^\top \otimes \smat{P \\ Y} \big) \Big\} \\
& = \alpha \otimes P + \Tr\Big\{ (\eta \otimes I_n) (1 \otimes \smat{A & B} ) \big(\gamma^\top \otimes \smat{P \\ Y} \big) \Big\} \\
& = \alpha \otimes P + \Tr\Big\{ (\eta \gamma^\top) \otimes \big(\smat{A & B} \smat{P \\ Y} \big) \Big\} \\
& = \alpha \otimes P + \Tr\Big\{ \beta \otimes \big(\smat{A & B} \smat{P \\ Y} \big) \Big\}   \,\, \forall (A,B) \in \mathcal{C}.
\end{align*}
To find $P = P^\top \succ 0$ and $Y$ subject to this condition is equivalent to \eqref{probl single LMI region} by the standard change of variables $Y = K P$.
\end{proof}

Parallel to Corollary~\ref{cor:suff cond}, we have the next result for the intersection of LMI regions satisfying Assumption~\ref{ass:rank 1}. 
\begin{corollary}
\label{cor:nec and suff cond}
Let Assumption~\ref{ass:full row rank} hold and $\mathcal{S}_i$, for $i = 1, \dots , r$, be an LMI region with data $(\alpha_i, \beta_i)$ satisfying Assumption~\ref{ass:rank 1}, i.e., $\beta_i = \eta_i \gamma_i^\top$ for some $\eta_i$ and $\gamma_i$ in $\real^s$. \eqref{probl multiple LMI regions} is feasible if and only if the next program is feasible
\begin{subequations}
\label{sol nec and suff:multiple LMI regions}
\begin{align}
& \hspace{-5pt} \text{find} & & \hspace{-5pt} P=P^\top \succ 0, Y \\
& \hspace{-5pt} \text{s.~t.} & & \hspace{-10pt} 
\bmat{
\left\{
\begin{aligned}
& (\eta_i \otimes I_n) \mb{Q} (\eta_i^\top \otimes I_n) + \alpha_i \otimes P    \\
& \hspace*{3pt} + \Tr\big\{  (\eta_i \otimes I_n) Z_{\tu{c}}^\top \big(\gamma_i^\top \otimes \smat{P \\ Y} \big) \big\}
\end{aligned}
\right\}
& \star \\
\gamma_i^\top \otimes  \smat{P \\ Y}  &  -\mb{A}} \prec 0  \notag\\
& & & \text{ for } i = 1,\dots, r.
\end{align}
\end{subequations}%
If \eqref{sol nec and suff:multiple LMI regions} is feasible, the controller gain in~\eqref{probl multiple LMI regions} is $K = Y P^{-1}$.
\end{corollary}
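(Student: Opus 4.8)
The plan is to replicate the short argument used for Corollary~\ref{cor:suff cond}, now invoking Theorem~\ref{thm:nec and suff cond} in \emph{both} directions. First I would unfold~\eqref{probl multiple LMI regions} through the definition~\eqref{charact matrix} of the characteristic matrices, rewriting it as the requirement that a single $P = P^\top \succ 0$ and $K$ (equivalently, after the change of variables $Y = KP$, a single $P$ and $Y$) satisfy, for every $i = 1,\dots, r$, the robust condition $\alpha_i \otimes P + \Tr\{\beta_i \otimes ((A+BK)P)\} \prec 0$ for all $(A,B) \in \mathcal{C}$; this is exactly the rewriting already carried out in the proof of Corollary~\ref{cor:suff cond}. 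The statement then reduces to showing that feasibility of this conjunction is equivalent to feasibility of~\eqref{sol nec and suff:multiple LMI regions}, which I would establish by running the argument of Theorem~\ref{thm:nec and suff cond} block by block.

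For the \emph{if} part, suppose $(P_0, Y_0)$ is feasible for~\eqref{sol nec and suff:multiple LMI regions}. For each $i$, the $i$-th block inequality is precisely~\eqref{sol rank 1:single LMI region:lmi} written for the region $\mathcal{S}_i$; taking its Schur complement (using $\mb{A} \succ 0$ from Assumption~\ref{ass:full row rank}), bounding the resulting quadratic terms by a completion of squares, i.e., the easy direction of Petersen's lemma, and then specialising the uncertainty through~\eqref{set C form 3} shows that the same pair $(P_0, Y_0)$ satisfies the $i$-th robust condition. Since this holds simultaneously for all $i$ with this common pair, $K = Y_0 P_0^{-1}$ solves~\eqref{probl multiple LMI regions}. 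This direction uses only the ``LMI $\Rightarrow$ robust'' half of the chain in Theorem~\ref{thm:nec and suff cond}, which leaves $(P,Y)$ untouched, so no coordination among the $r$ blocks is required.

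For the \emph{only if} part, start from $(P_0, K_0)$ feasible for~\eqref{probl multiple LMI regions}, set $Y_0 = K_0 P_0$ so that each of the $r$ robust conditions holds at $(P_0, Y_0)$, and run the reverse direction of Theorem~\ref{thm:nec and suff cond} for each region: Petersen's lemma together with~\eqref{set C form 3} gives, for region $\mathcal{S}_i$, a multiplier $\lambda_i > 0$ such that the $i$-th block of~\eqref{sol nec and suff:multiple LMI regions} holds at $(P_0/\lambda_i, Y_0/\lambda_i)$. The main obstacle is exactly that $\lambda_i$ depends on $i$, so the $r$ block inequalities are certified by $r$ \emph{different} rescalings of $(P_0, Y_0)$ and do not at once yield one pair feasible for~\eqref{sol nec and suff:multiple LMI regions}. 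I would resolve this by exploiting that each robust condition is positively homogeneous of degree one in $(P,Y)$, hence still holds at $(tP_0, tY_0)$ for every $t>0$: the set of $t>0$ for which the $i$-th block of~\eqref{sol nec and suff:multiple LMI regions} holds at $(tP_0, tY_0)$ is nonempty by the previous step and, using $\mb{Q}\succeq 0$ together with $\smat{P_0 \\ Y_0}^\top \mb{A}^{-1} \smat{P_0 \\ Y_0}\succ 0$, is an interval in $t$; picking $t$ common to these $r$ intervals produces the desired feasible pair, the corresponding gain being again $K_0$. This coordination across the $r$ regions is the step I expect to be the most delicate. Finally, the gain formula $K = YP^{-1}$ and the passage from a single region to an intersection mirror Corollary~\ref{cor:suff cond} and Fact~\ref{fact:inters}.
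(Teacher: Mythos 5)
Your \emph{if} direction is correct and is the intended one: each block of \eqref{sol nec and suff:multiple LMI regions} is exactly \eqref{sol rank 1:single LMI region:lmi} written for $\mathcal{S}_i$, and the implication ``LMI $\Rightarrow$ robust condition'' from Theorem~\ref{thm:nec and suff cond} leaves the pair $(P,Y)$ untouched, so one feasible pair certifies all $r$ robust conditions at once.

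The \emph{only if} direction has a genuine gap at exactly the step you flag as delicate. You reduce the problem to finding a single $t>0$ such that, for every $i$, the $i$-th block holds at $(tP_0,tY_0)$; after a Schur complement and division by $t$ this reads
\begin{equation*}
\mb{D}_i+\tfrac{1}{t}\,(\eta_i\eta_i^\top)\otimes\mb{Q}+t\,(\gamma_i\gamma_i^\top)\otimes \mb{W}\prec 0,\qquad \mb{W}:=\smat{P_0\\ Y_0}^\top\mb{A}^{-1}\smat{P_0\\ Y_0},
\end{equation*}
with $\mb{D}_i$ the linear-in-$(P_0,Y_0)$ part. Your observation that the admissible $t$ for each $i$ form a nonempty interval $\Lambda_i$ is right (for each unit vector $v$ the map $t\mapsto v^\top(\cdot)v$ is convex on $(0,\infty)$, and an intersection of intervals is an interval). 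But you then ``pick $t$ common to these $r$ intervals'' with no argument that they overlap, and nonempty intervals need not overlap. Indeed, whenever $\mb{Q}\neq 0$ and $\eta_i\neq 0$ the term $\tfrac{1}{t}(\eta_i\eta_i^\top)\otimes\mb{Q}$ excludes small $t$ and the term $t(\gamma_i\gamma_i^\top)\otimes\mb{W}$ (with $\mb{W}\succ0$) excludes large $t$, so each $\Lambda_i$ is a bounded interval bounded away from $0$ whose endpoints depend on $i$ through $\eta_i$, $\gamma_i$ and $\mb{D}_i$; nothing in your argument rules out, say, $\Lambda_1=(1,2)$ and $\Lambda_2=(3,4)$. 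Equivalently, Petersen's lemma hands you one multiplier $\lambda_i$ per region, and merging them into one is precisely what must be proved. Note also that restricting attention to the ray $\{(tP_0,tY_0):t>0\}$ is itself a loss of generality: if no common $t$ exists on that ray, some other pair could still be feasible for \eqref{sol nec and suff:multiple LMI regions}, so this route as written neither establishes nor refutes necessity. To close the argument you need either a structural reason why the $r$ intervals always share a point (this does hold, e.g., in the noiseless case $\mb{Q}=0$, where every $\Lambda_i$ is of the form $(0,\bar t_i)$), or a genuinely different proof of the only-if direction for intersections.
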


Motivated by Theorem~\ref{thm:nec and suff cond} and Corollary~\ref{cor:nec and suff cond}, we examine the subsets of the complex plane to which LMI regions satisfying Assumption~\ref{ass:rank 1} give rise.
A generic LMI region with $s=1$ and data $(\alpha,\beta)$ has trivially $\beta$ of rank $1$, and is $\mathcal{S} = \{ \cnri \alpha_{11} + \beta_{11} (z + \bar z) = \alpha_{11} + x 2 \beta_{11} < 0 \}$, which can express vertical halfplanes, besides being possibly $\emptyset$ or $\compl$.
For $s=2$, the subsets of $\compl$ that can be expressed with $\beta$ of rank $1$ are determined in the next lemma.
\begin{lemma}
\label{lemma:regions s=2 beta rank 1}
Let $\mathcal{S}$ be an LMI region with $s=2$ and data $(\alpha,\beta)$ satisfying Assumption~\ref{ass:rank 1}, i.e., $\beta = \eta \gamma^\top$ for some $\eta$ and $\beta$ in $\real^2$.
Then, a nontrivial $\mathcal{S}$ (i.e., different from $\emptyset$ and $\compl$) can only be: a vertical strip, a vertical halfplane, a disk centered on the real axis or an intersection of the last two.
\end{lemma}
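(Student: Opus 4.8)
The plan is to substitute the rank-one structure $\beta=\eta\gamma^\top$ into the explicit description \eqref{generic LMI region s=2} of an $s=2$ LMI region and read off which conics the two defining inequalities cut out. With $\beta_{11}=\eta_1\gamma_1$, $\beta_{12}=\eta_1\gamma_2$, $\beta_{21}=\eta_2\gamma_1$, $\beta_{22}=\eta_2\gamma_2$, the decisive algebraic fact is that a $2\times2$ matrix of rank at most one has vanishing determinant, i.e.\ $\beta_{11}\beta_{22}=\beta_{12}\beta_{21}$; hence
\[
(\beta_{12}+\beta_{21})^2-4\beta_{11}\beta_{22}=(\beta_{12}-\beta_{21})^2=:\delta\ge0 ,
\]
and moreover $\delta=(\eta_1\gamma_2-\eta_2\gamma_1)^2$, so $\delta=0$ exactly when $\eta$ and $\gamma$ are parallel (equivalently, when $\beta$ is symmetric). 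Consequently the $x^2$- and $y^2$-coefficients in the second inequality of \eqref{generic LMI region s=2} both equal $-\delta$, so that inequality becomes $\alpha_{11}\alpha_{22}-\alpha_{12}^2+2b\,x-\delta(x^2+y^2)>0$ with $b:=\alpha_{11}\beta_{22}+\alpha_{22}\beta_{11}-\alpha_{12}(\beta_{12}+\beta_{21})$, while the first inequality is the $y$-free affine condition $\alpha_{11}+2\beta_{11}x<0$. I would then split on the sign of $\delta$.

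If $\delta=0$, both inequalities are affine in $x$ and do not involve $y$; each defines a vertical halfplane, or is trivially $\emptyset$ or $\compl$. The intersection of two vertical halfplanes is a vertical halfplane (when they face the same way, or one contains the other), a vertical strip (when they face oppositely and overlap), or $\emptyset$; discarding the trivial cases leaves a vertical strip or a vertical halfplane.

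If $\delta>0$, completing the square rewrites the second inequality as $(x-b/\delta)^2+y^2<R^2$ with $R^2:=b^2/\delta^2+(\alpha_{11}\alpha_{22}-\alpha_{12}^2)/\delta$, which is an open disk centered on the real axis at $(b/\delta,0)$ when $R^2>0$ and is $\emptyset$ when $R^2\le0$. The first inequality is again a vertical halfplane, or $\emptyset$, or $\compl$. Intersecting: if the first inequality is vacuous we obtain the disk itself; if it is a proper halfplane we obtain either that same disk (when the halfplane contains it), the intersection of the disk with the halfplane, or $\emptyset$. Collecting all nontrivial outcomes of the two branches yields exactly a vertical strip, a vertical halfplane, a disk centered on the real axis, or an intersection of the last two, which is the claim.

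The only nonroutine step is the identity $(\beta_{12}+\beta_{21})^2-4\beta_{11}\beta_{22}=(\beta_{12}-\beta_{21})^2$ that collapses the generic $s=2$ conic to one with equal $x^2$- and $y^2$-coefficients; after that the argument is a finite enumeration of degenerate subcases (whether $b$ or $\beta_{11}$ vanishes, the sign of $R^2$, and same- versus opposite-facing halfplanes), together with the elementary observation that a nonempty nontrivial intersection of two $x$-halfplanes is either a strip or a halfplane.
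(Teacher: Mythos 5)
Your proposal is correct and follows essentially the same route as the paper's proof: substitute $\beta=\eta\gamma^\top$ into the explicit $s=2$ description \eqref{generic LMI region s=2}, observe that the $x^2$- and $y^2$-coefficients coincide at $-(\eta_1\gamma_2-\eta_2\gamma_1)^2$, and split on whether this quantity vanishes (two affine inequalities, giving a halfplane or strip) or not (complete the square, giving a disk centered on the real axis, possibly intersected with the halfplane, or $\emptyset$). Your explicit determinant identity $(\beta_{12}+\beta_{21})^2-4\beta_{11}\beta_{22}=(\beta_{12}-\beta_{21})^2$ is just a cleaner way of stating the same simplification the paper performs by direct substitution.
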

\begin{proof}
Specialize the expression in~\eqref{generic LMI region s=2} of a generic LMI region with $s=2$ for the $\mathcal{S}$ considered in the statement with $\eta:=\smat{\eta_1\\ \eta_2}$, $\gamma:=\smat{\gamma_1 \\ \gamma_2}$ and $\beta:=\smat{\eta_1\gamma_1 &  \eta_1\gamma_2\\ \eta_2\gamma_1 &  \eta_2\gamma_2}$.
Then, $\mathcal{S}$ rewrites after some algebraic computations as the set of points $x + j y \in \compl$ such that
\begin{subequations}
\label{LMI region s=2 rank 1 after subst}
\begin{align}
& 0 > \alpha_{11} + 2 x (\eta_1 \gamma_1 ) \label{LMI region s=2 rank 1 after subst:affine}\\
& 0 < \alpha_{11} \alpha_{22} - \alpha_{12}^2 + 2x \big(\alpha_{11} \eta_2 \gamma_2 + \alpha_{22} \eta_1 \gamma_1 \notag \\
& - \alpha_{12}(\eta_1 \gamma_2 + \eta_2 \gamma_1) \big) - ( x^2 + y^2) (\eta_1 \gamma_2 -\eta_2 \gamma_1)^2 . \label{LMI region s=2 rank 1 after subst:quadratic}
\end{align}
\end{subequations}
If $\eta_1 \gamma_2 -\eta_2 \gamma_1 = 0$, $\mathcal{S}$ is constituted from~\eqref{LMI region s=2 rank 1 after subst} by two affine inequalities in the real part $x$, hence a nontrivial $\mathcal{S}$ can be a vertical halfplane or a vertical strip depending on the values of $\alpha$, $\eta$, $\gamma$.
Otherwise, divide by $(\eta_1 \gamma_2 -\eta_2 \gamma_1)^2$ in~\eqref{LMI region s=2 rank 1 after subst:quadratic} and complete the square, which can be done for each $\alpha$, $\eta$, $\gamma$ with $\eta_1 \gamma_2 -\eta_2 \gamma_1 \neq 0$; after some computations, \eqref{LMI region s=2 rank 1 after subst} rewrites equivalently as
\begin{align*}
& 0 > \alpha_{11} + 2 x (\eta_1 \gamma_1 ), \sigma> (x_0 - x)^2 + y^2  \\
& x_0:= \big( \alpha_{11} \eta_2 \gamma_2 + \alpha_{22} \eta_1 \gamma_1 - \alpha_{12}(\eta_1 \gamma_2 + \eta_2 \gamma_1) \big) \notag  \\
&  \hspace*{35pt} \cdot \big( \eta_1 \gamma_2 -\eta_2 \gamma_1 \big)^{-2}\\
&  \sigma := \big( \alpha_{11}^2 \eta_2^2 \gamma_2^2 + \alpha_{22}^2 \eta_1^2 \gamma_1^2 + \alpha_{12}^2 4 \eta_1 \eta_2 \gamma_1 \gamma_2  \notag \\
& + \alpha_{11} \alpha_{22} (\eta_1^2 \gamma_2^2 + \eta_2^2 \gamma_1^2) -2 \alpha_{11} \alpha_{12} \eta_2 \gamma_2 (\eta_1 \gamma_2 + \eta_2 \gamma_1) \notag \\
& -2 \alpha_{12} \alpha_{22} \eta_1 \gamma_1 (\eta_1 \gamma_2 + \eta_2 \gamma_1)   \big) \cdot \big( \eta_1 \gamma_2 -\eta_2 \gamma_1 \big)^{-4}  
\end{align*}
If $\sigma$ is nonpositive, $\mathcal{S}$ is an empty set. Otherwise, a nontrivial $\mathcal{S}$ can be a disk centered on the real axis or an intersection of it with a halfplane. 
\end{proof}

Let us exemplify Lemma~\ref{lemma:regions s=2 beta rank 1} and Theorem~\ref{thm:nec and suff cond} on the important special cases of Hurwitz and Schur stability.

\begin{example}
We have shown in~\eqref{Hurwitz as LMI region} how the condition for Hurwitz stability can be expressed through an LMI region $\mathcal{S}_{\tu{H}}$ with $\alpha = 0$ and $\beta = 1$.
Such LMI region satisfies Assumption~\ref{ass:rank 1} trivially with, e.g., $\eta = \gamma =1$.
Theorem~\ref{thm:nec and suff cond} claims then that, under Assumption~\ref{ass:full row rank}, \eqref{probl single LMI region} is feasible if and only if there exist $P = P^\top \succ 0$ and $Y$ such that
\begin{equation*}
\bmat{
\Tr\big\{  Z_{\tu{c}}^\top \big(\smat{P \\ Y} \big) \big\} + \mb{Q} & \star \\
\smat{P \\ Y}  &  -\mb{A}} \prec 0.
\end{equation*}
This claim is precisely the same as \cite[Thm.~2]{bisoffiArXivPetersen} by taking into account \cite[Remark~3 and Eq.~(38)]{bisoffiArXivPetersen}.
\end{example}

\begin{example}
We have shown in~\eqref{Schur as LMI region} how the condition for Schur stability can be expressed through an LMI region $\mathcal{S}_{\tu{S}}$ with $\alpha = \smat{-1 & 0\\ 0 & -1}$ and $\beta = \smat{0 & 0\\ -1 & 0}$.
As shown in Example~\ref{example:assumption rank 1}, such LMI region satisfies Assumption~\ref{ass:rank 1} with, e.g., $\eta =\smat{0\\ -1}$ and $\gamma =\smat{1\\ 0}$.
Theorem~\ref{thm:nec and suff cond} claims then that, under Assumption~\ref{ass:full row rank}, \eqref{probl single LMI region} is feasible if and only if there exist $P = P^\top \succ 0$ and $Y$ such that, after some computations,
\begingroup
\setlength{\arraycolsep}{1.2pt}
\thinmuskip=.1mu plus 1mu
\medmuskip=.2mu plus 2mu
\thickmuskip=.3mu plus 3mu
\begin{align*}
\smat{
-P & -\smat{P\\Y}^\top Z_{\tu{c}} & \smat{P\\Y}^\top \\
- Z_{\tu{c}}^\top\smat{P\\ Y} & -P + \mb{Q} & 0\\
\smat{P\\ Y} & 0 & -\mb{A}} \prec 0 \iff 
\smat{
-P + \mb{Q} &  - Z_{\tu{c}}^\top\smat{P\\ Y} & 0\\ 
-\smat{P\\Y}^\top Z_{\tu{c}} &  -P & \smat{P\\Y}^\top \\
0 & \smat{P\\ Y} & -\mb{A}} \prec 0. 
\end{align*}
\endgroup
This claim is precisely the same as \cite[Thm.~1]{bisoffiArXivPetersen} by taking into account \cite[Remark~3 and Eq.~(37)]{bisoffiArXivPetersen}.
\end{example}

As shown in the examples, \eqref{sol rank 1:single LMI region} is a necessary and sufficient condition to solve \eqref{probl single LMI region} for the open left halfplane and the open unit disk. 
On the other hand, Lemma~\ref{lemma:regions s=2 beta rank 1} improves on expressivity with respect to these two special cases by showing that we have equivalence with~\eqref{sol rank 1:single LMI region} for \emph{arbitrary} intersections of halfplanes left or right of \emph{any} real part, and disks with \emph{any} center on the real axis and \emph{any} radius.
When considering the regions expressed by $s=1$ and $s=2$, one can expect that increasing $s$ further yields more complex subsets of the complex plane, but finding their analytic expressions seems hardly tractable (numerical investigations aside), so the case $s>2$ is an open question.
Although Assumption~\ref{ass:rank 1} limits expressivity in the case $s=2$, we would like to show that by intersection of halfplanes and disks we can still obtain inner-approximations of subsets of the complex plane as in the next example.
Imposing eigenvalues within such an intersection for all matrices consistent with data is then without conservatism by virtue of Corollary~\ref{cor:nec and suff cond}.
\begin{example}
\label{example:wedge-approx}
Example~\ref{example:wedge} presented a relevant region for performance specifications and we show here a way to find an inner-approximation made only of intersections of disks.
Instead of examining all cases, we consider $\theta \in (0, \pi/4)$ and $\rho > \ell (3 + 2 \sqrt{2} ) \simeq 5.83 \ell$
since the first condition excludes insufficiently damped responses and the second one expresses that we do not need the magnitude $\rho$ of high-frequency closed-loop poles to be too close to the dominant closed-loop convergence rate $\ell$ (e.g., $\rho = 10 \ell$ would allow a decade between the two).
Abbreviate $\sin \theta$ and $\cos \theta$ with $\si_\theta$ and $\co_\theta$.
A circle centered on the negative real axis tangent to the cone $\{ \ri \co_\theta |y| = - \si_\theta |x|, x<0 \}$ can be parametrized by its center $x_{\tu{t}}$ and has equation $\{ \ri (x - x_{\tu{t}})^2 + y^2 = \si_\theta^2 x_{\tu{t}}^2 \}$.
The blue, green and red circles in Fig.~\ref{fig:circleMaxArea} are examples of such tangent circles.
Under the previous conditions on $\ell$, $\rho$, $\theta$, the green circle internally tangent to $\{ \ri x^2 + y^2 = \rho^2 \}$ has right end less than $-\ell$, hence the green disk has the largest intersection with $\mathcal{S}(\ell,\rho,\theta)$ among all other disks between it and the red disk.
As a second extreme, the blue disk has largest intersection with $\mathcal{S}(\ell,\rho,\theta)$ among all other disks with center farther to the left.
Then, the disk with the largest intersection with $\mathcal{S}(\ell,\rho,\theta)$ is between these two extremes (blue and green), and we need to consider surfaces like the cyan one.
A cyan surface has a break point with real part $\frac{\rho^2 + x_{\tu{t}}^2 \co_\theta^2}{2 x_{\tu{t}}} \in [-\rho,x_{\tu{t}}]$ and its \emph{half} area
\begin{align*}
& \frac{\pi \rho^2}{2} -\frac{1}{4}
\sqrt{(x_{\tu{t}}^2 (1+\si_\theta)^2 - \rho^2) (\rho^2-(1-\si_\theta)^2 x_{\tu{t}}^2)} \\
& - \frac{\rho^2}{2} \arccos \frac{x_{\tu{t}}^2 \co_\theta^2+\rho^2}{2 \rho x_{\tu{t}}} 
+\frac{\si_\theta^2 x_{\tu{t}}^2}{2} \arccos \frac{x_{\tu{t}}^2 (1-\si_\theta^2) - \rho^2}{2 \si_\theta x_{\tu{t}}^2}
\end{align*}
can be computed with an integral.
To find the best inner-approximation, we maximize this area with respect to $x_{\tu{t}}$, which has lower and upper bounds $-\rho/\co_\theta$ and $-\rho/(1+\si_\theta)$ (corresponding to the real part of centers of blue and green circles).
\end{example}

\begin{figure}
\centerline{\includegraphics[scale=.7]{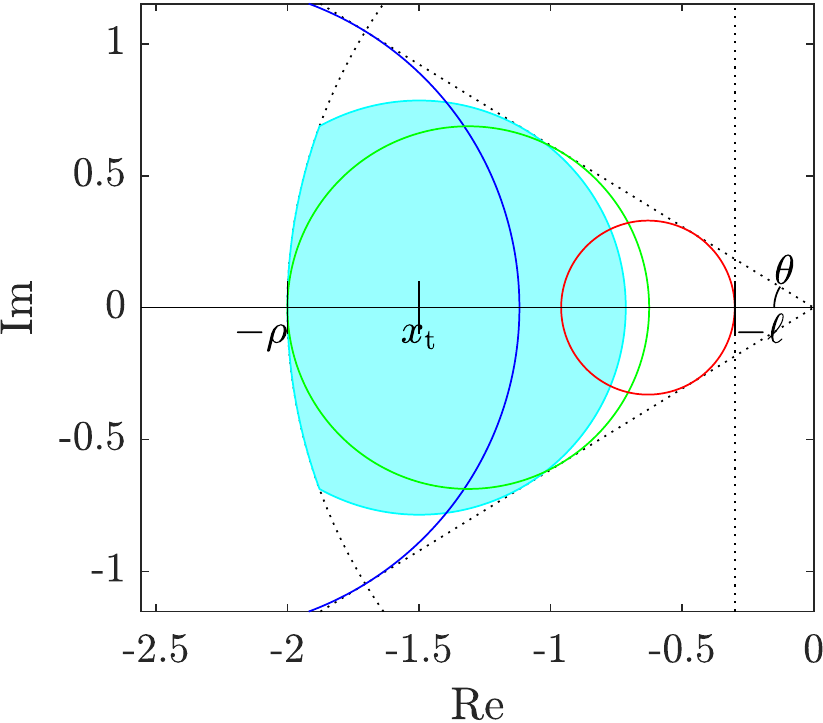}}
\caption{Finding largest intersection of two disks inner-approximating $\mathcal{S}(\ell,\rho,\theta)$ as explained in Example~\ref{example:wedge-approx}, for $\ell=0.3$, $\rho=2$, $\theta=\pi/5.7$.}
\label{fig:circleMaxArea}
\end{figure}

\begin{example}
\label{example:perf dt}
The discrete-time analogue of the continuous-time $\mathcal{S}(\ell,\rho,\theta)$ is in Fig.~\ref{fig:S(ell,rho,theta)}, right, and can not be represented by an LMI region since loci with constant natural frequency equal to $\rho$ and constant damping $\theta$ are obtained through complex exponentials \cite[\S 8.2.3]{franklin1994feedback}.
On the other hand, one can still give an inner-approximation through a disk as we will do in Section~\ref{sec:num ex:dt}.
\end{example}

\section{Alternative sufficient conditions}
\label{sec:suff cond alt}

In this section we briefly present an approach alternative to Theorem~\ref{thm:suff cond} to obtain sufficient conditions for $\mathcal{S}$-stability of all matrices consistent with data, and an approach that starts from a disturbance model based on an instantaneous bound rather than a bound on the energy of the disturbance sequence.
Both approaches rely on the S-procedure.

\subsection{Sufficient condition alternative to Theorem~\ref{thm:suff cond}}
\label{sec:suff cond S-proc}

Unlike Section~\ref{sec:suff cond}, we work directly with the first form of the set $\mathcal{C}$ of matrices consistent with data, see~\eqref{set C form 1}, and use the same strategy of replacing a block diagonal uncertainty with a full uncertainty.
We obtain then the next result.

\begin{proposition}
\label{prop:suff cond S-proc}
Let $\mathcal{S}$ be an LMI region with data $(\alpha,\beta)$. \eqref{probl single LMI region} is feasible if the next program is feasible
\begin{subequations}
\label{sol:suff cond S-proc}
\begin{align}
& \text{find} & & \hspace{-5pt} P=P^\top \succ 0, Y, \tau \ge 0 \label{sol:suff cond S-proc:positivity} \\
& \text{s.~t.} & & \hspace{-10pt}
\bmat{
\alpha \otimes P & \beta^\top \otimes \smat{P \\ Y}^\top \\
\beta \otimes \smat{P \\ Y} & 0}
- \tau 
\bmat{
I_s \otimes \mb{C} & I_s \otimes \mb{B}^\top \\
I_s \otimes \mb{B} & I_s \otimes \mb{A}
} \prec 0.
\label{sol:suff cond S-proc:lmi}
\end{align}
\end{subequations}
If \eqref{sol:suff cond S-proc} is feasible, the controller gain in~\eqref{probl single LMI region} is $K = Y P^{-1}$.
\end{proposition}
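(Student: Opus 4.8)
The plan is to mirror the structure of the proof of Theorem~\ref{thm:suff cond}, but to replace the use of Petersen's lemma and the third form \eqref{set C form 3} of $\mathcal{C}$ by a direct application of the S-procedure starting from the first form \eqref{set C form 1}. First I would recall that \eqref{probl single LMI region} is, by the change of variables $Y = KP$, equivalent to finding $P = P^\top \succ 0$ and $Y$ such that $\alpha \otimes P + \Tr\{\beta \otimes (\smat{A & B}\smat{P\\Y})\} \prec 0$ for all $(A,B) \in \mathcal{C}$, i.e., for all $Z = \smat{A & B}^\top = \smat{A^\top \\ B^\top}$ satisfying the single quadratic matrix inequality $\smat{I & Z^\top}\smat{\mb{C} & \mb{B}^\top \\ \mb{B} & \mb{A}}\smat{I \\ Z} \preceq 0$ from \eqref{set C form 1}.

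The key step is to rewrite the target inequality, for fixed $P$ and $Y$, as a quadratic matrix inequality in $Z$ of the same shape. Writing $\smat{A & B}\smat{P\\Y} = Z^\top \smat{P\\Y}$ and using $\Tr\{\beta \otimes (Z^\top \smat{P\\Y})\} = (\beta \otimes Z^\top)(I_s \otimes \smat{P\\Y}) + (I_s \otimes \smat{P\\Y}^\top)(\beta^\top \otimes Z)$, one sees the left-hand side is an affine-quadratic form in $Z$ that can be expressed as $\smat{I & Z^\top}_{\otimes}\, \mathcal{M}(P,Y)\, \smat{I \\ Z}_{\otimes} \prec 0$ for the appropriate Kronecker-lifted identity/$Z$ blocks, where the $(1,1)$ block of $\mathcal{M}$ is $\alpha \otimes P$, the off-diagonal blocks are $\beta^\top \otimes \smat{P\\Y}^\top$ and $\beta \otimes \smat{P\\Y}$, and the $(2,2)$ block is $0$; this is exactly the first matrix appearing in \eqref{sol:suff cond S-proc:lmi}. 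Likewise, the constraint defining $\mathcal{C}$ lifts (tensoring with $I_s$, which does not change the sign of the constraint or of the vectors $\smat{I \\ Z}_\otimes$) to $\smat{I & Z^\top}_\otimes \smat{I_s \otimes \mb{C} & I_s \otimes \mb{B}^\top \\ I_s \otimes \mb{B} & I_s \otimes \mb{A}}\smat{I \\ Z}_\otimes \preceq 0$. Then the S-procedure gives that if there exists $\tau \ge 0$ with $\mathcal{M}(P,Y) - \tau\,(\text{lifted constraint matrix}) \prec 0$ — which is precisely \eqref{sol:suff cond S-proc:lmi} — then the target inequality holds for all admissible $Z$; as in Theorem~\ref{thm:suff cond}, this is the point where conservatism enters, because the block-diagonal lifted vector $\smat{I\\Z}_\otimes$ ranges only over a subset of all matrices satisfying the lifted quadratic inequality, so the S-procedure is applied with a full (rather than block-structured) multiplier and only sufficiency, not equivalence, is obtained.

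I would close by noting that feasibility of \eqref{sol:suff cond S-proc} thus yields $P \succ 0$, $Y$ with $\alpha \otimes P + \Tr\{\beta \otimes (\smat{A&B}\smat{P\\Y})\} \prec 0$ for all $(A,B)\in\mathcal{C}$, and undoing $Y = KP$ recovers feasibility of \eqref{probl single LMI region} with $K = YP^{-1}$, exactly as in the last lines of the proof of Theorem~\ref{thm:suff cond}. The main obstacle I anticipate is purely bookkeeping: getting the Kronecker identities right so that the lifted quadratic form in $Z$ has \emph{exactly} the block pattern of the first matrix in \eqref{sol:suff cond S-proc:lmi}, in particular verifying that $I_s \otimes \mb{C}$, $I_s \otimes \mb{B}$, $I_s \otimes \mb{A}$ appear with the stated dimensions and that tensoring the $\mathcal{C}$-constraint with $I_s$ is legitimate (it is, since $M \preceq 0 \iff I_s \otimes M \preceq 0$). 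No genuinely hard estimate is needed; the content is the S-procedure step plus the same block-vs-full relaxation already isolated in \eqref{for all full}–\eqref{for all block}.
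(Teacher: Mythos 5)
Your proposal is correct and follows essentially the same route as the paper's proof: apply the S-procedure to the first form \eqref{set C form 1} of $\mathcal{C}$ lifted by $I_s$, observe that the resulting guarantee over all full uncertainties $\sa{Z}$ in particular covers the block-structured $\sa{Z}=I_s\otimes Z$ with $Z^\top=\smat{A & B}\in\mathcal{C}$, and undo the change of variables $Y=KP$; you also correctly locate the source of conservatism in the block-diagonal-versus-full relaxation. The Kronecker bookkeeping you flag works out exactly as you anticipate, via $\bigl(I_s\otimes Z^\top\bigr)\bigl(\beta\otimes\smat{P\\ Y}\bigr)=\beta\otimes\bigl(Z^\top\smat{P\\ Y}\bigr)$ and $I_s\otimes M\preceq 0\iff M\preceq 0$.
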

\begin{proof}
\eqref{sol:suff cond S-proc:lmi} implies by the S-procedure and \eqref{sol:suff cond S-proc:positivity} that
\begin{align*}
& 
\bmat{I_{sn} & \sa{Z}^\top}
\bmat{\alpha \otimes P & \beta^\top \otimes \smat{P\\ Y}^\top \\
\beta \otimes \smat{P\\ Y} & 0}
\bmat{I_{sn}\\ \sa{Z}} \prec 0\\
& \hspace*{4pt} \forall  \sa{Z} \colon \bmat{I_s \otimes I_n & \sa{Z}^\top} 
\bmat{I_s \otimes \mb{C} & I_s \otimes \mb{B}^\top\\ I_s \otimes \mb{B} & I_s \otimes \mb{A}}
\bmat{I_s \otimes I_n \\ \sa{Z}} \preceq 0.
\end{align*}
Since $\sa{Z}$ is a full uncertainty, the last condition implies that
\begin{align}
& \bmat{I_{sn} & \sa{Z}^\top}
\bmat{\alpha \otimes P & \beta^\top \otimes \smat{P\\ Y}^\top \\
\beta \otimes \smat{P\\ Y} & 0}
\bmat{I_{sn}\\ \sa{Z}} \prec 0   \label{suff S-proc:from block to full} \\
& \hspace*{3pt} \forall  \sa{Z}= I_s \otimes Z \colon  \notag \\
& \hspace*{6pt} \bmat{I_s \otimes I_n & I_s \otimes Z^\top} 
\bmat{I_s \otimes \mb{C} & I_s \otimes \mb{B}^\top\\ I_s \otimes \mb{B} & I_s \otimes \mb{A}}
\bmat{I_s \otimes I_n \\ I_s \otimes Z} \preceq 0 . \notag
\end{align}
The condition
\begin{align*}
& 0 \succeq \bmat{I_s \otimes I_n & I_s \otimes Z^\top} 
\bmat{I_s \otimes \mb{C} & I_s \otimes \mb{B}^\top\\ I_s \otimes \mb{B} & I_s \otimes \mb{A}}
\bmat{I_s \otimes I_n \\ I_s \otimes Z} \\
& = I_s \otimes \mb{C}  + (I_s \otimes Z^\top)(I_s \otimes \mb{B}) + (I_s \otimes \mb{B}^\top) (I_s \otimes Z) \\
& \hspace*{30pt} + (I_s \otimes Z^\top) (I_s \otimes \mb{A})( I_s \otimes Z ) \\
& = I_s \otimes (\mb{C} + Z^\top \mb{B} + \mb{B}^\top Z + Z^\top \mb{A} Z ) 
\end{align*}
is equivalent to
\begin{equation*}
0 \succeq \mb{C} + Z^\top \mb{B} + \mb{B}^\top Z + Z^\top \mb{A} Z =
\smat{I_n & Z^\top} \smat{\mb{C} & \mb{B}^\top \\ \mb{B} & \mb{A} }  \smat{I_n \\ Z}.
\end{equation*}
Hence, \eqref{suff S-proc:from block to full} is equivalent to
\begin{align*}
& \alpha \otimes P + \Tr \Big\{ \sa{Z}^\top \big( \beta \otimes \smat{P\\ Y} \big) \Big\} \prec 0 \\
& \hspace*{40pt} \forall  \sa{Z} = I_s \otimes Z \colon \smat{I_n & Z^\top} \smat{\mb{C} & \mb{B}^\top \\ \mb{B} & \mb{A} }  \smat{I_n \\ Z} \preceq 0 \\
& \!\!\iff \alpha \otimes P + \Tr \Big\{ \big( I_s \otimes Z^\top \big) \big( \beta \otimes \smat{P\\ Y} \big) \Big\} \prec 0 \\
& \hspace*{90pt} \forall Z \colon \smat{I_n & Z^\top} \smat{\mb{C} & \mb{B}^\top \\ \mb{B} & \mb{A} }  \smat{I_n \\ Z} \preceq 0 \\
& \!\!\iff 0 \succ \alpha \otimes P + \Tr \big\{ \big( I_s \otimes \smat{A & B} \big) \big( \beta \otimes \smat{P\\ Y} \big) \big\} \\
& \hspace*{31pt} = \alpha \otimes P + \Tr\big\{ \beta \otimes \big( \smat{A & B} \smat{P\\ Y} \big) \big\} \, \forall (A,B) \in \mathcal{C}
\end{align*}
by the form of $\mathcal{C}$ in~\eqref{set C form 1}.
\end{proof}

In the same way as Corollary~\ref{cor:suff cond} from Theorem~\ref{thm:suff cond}, one can obtain conditions for intersections of LMI regions from Proposition~\ref{prop:suff cond S-proc} easily, and it is thus omitted.

\subsection{Sufficient conditions for alternative disturbance model}
\label{sec:suff cond S-proc instant}

Unlike Section~\ref{sec:setting}, we consider a disturbance model involving not the energy of the disturbance sequence as~\eqref{set D}, but an instantaneous bound on the disturbance as
\begin{equation}
\label{set Di}
\mathcal{D}_{\tu{i}} := \{ d \in \real^n \colon dd^\top \preceq \epsilon I \},
\end{equation}
which expresses the bound $|d|^2 \le \epsilon$.
We refer the reader to \cite{bisoffi2021tradeoffs} for a discussion on the implications of instantaneous and energy bounds on disturbance for data-driven control.
The set of matrices consistent with a \emph{single} data point $i = 0, \dots, T-1$ is now
\begingroup
\thickmuskip=1.5mu plus 3mu
\begin{subequations}
\label{set Ci}
\begin{align}
& \mathcal{C}_i := \{ (A,B) \colon x^\circ (t_i) = A x(t_i) + B u(t_i) + d, d\in \mathcal{D}_{\tu{i}}  \} \\
& = \{ \bmat{A & B} = Z^\top \colon \bmat{I_n & Z^\top} \bmat{\mb{c}_i & \mb{b}_i^\top\\ \mb{b}_i & \mb{a}_i } \bmat{I_n\\ Z} \preceq 0 \} \label{set Ci:set only}\\
& \mb{c}_i : =  -\epsilon I + x^\circ(t_i) {x^\circ(t_i)}^\top \label{set Ci:c_i}\\
& \mb{b}_i : =  - \bmat{x(t_i)\\ u(t_i)} {x^\circ(t_i)}^\top , \mb{a}_i : = \bmat{x(t_i)\\ u(t_i)} \bmat{x(t_i) \\ u(t_i) }^\top, \label{set Ci:b_i and a_i}
\end{align}%
\end{subequations}%
\endgroup
cf.~\eqref{set C} and \eqref{set C form 1}.
Hence, the set of matrices consistent with \emph{all} data points is
\begin{equation}
\label{set I}
\mathcal{I} := \bigcap_{i \in \mathbb{I}} \mathcal{C}_i, \quad
\mathbb{I} := \{ 0,1, \dots, T-1 \}.
\end{equation}

\begin{remark}
We parallel here the observations in Remark~\ref{remark:different dist bound}.
Instead of~\eqref{set Di}, one can consider for the disturbance $d$ the instantaneous bound given by
\begin{equation*}
\mathcal{D}_{\tu{i}} := \{ d \in \real^n \colon \bmat{I & d} \bmat{r & s^\top \\ s & q} \bmat{I \\ d^\top} \preceq 0 \}
\end{equation*}
with matrix $r$ symmetric and scalar $q > 0$.
Moreover, one can obtain a set $\mathcal{C}_{\tu{i}}$ analogous to \eqref{set Ci}, namely, $\mathcal{C}_{\tu{i}}$ as in~\eqref{set Ci:set only} with, instead of \eqref{set Ci:c_i} and \eqref{set Ci:b_i and a_i},
\begin{align*}
& \hspace*{-8pt}\mb{c}_i : =  r + x^\circ(t_i) s + s^\top {x^\circ(t_i)}^\top + q \, x^\circ(t_i) {x^\circ(t_i)}^\top\\
& \hspace*{-8pt}\mb{b}_i : =  - \bmat{x(t_i)\\ u(t_i)} (s +  q {x^\circ(t_i)}^\top) , \mb{a}_i : = q \bmat{x(t_i)\\ u(t_i)} \bmat{x(t_i) \\ u(t_i) }^\top\!\!\!\!,
\end{align*}%
The rest of our results hold identically.
\end{remark}

Analogously to~\eqref{probl single LMI region}, we would like to solve
\begin{subequations}
\label{probl single LMI region intersec}
\begin{align}
& \text{find} & & \hspace*{-4pt} P = P^\top \succ 0, K \\
& \text{s.~t.} & &  \hspace*{-4pt} \alpha \otimes P + \beta \otimes \big( (A + B K ) P \big) \notag \\
& & & \hspace*{-4pt} \hspace*{2pt} + \beta^\top \otimes \big( P (A + B K)^\top \big) \prec 0 \quad \forall (A,B) \in \mathcal{I}.
\end{align}
\end{subequations}
With an approach similar to Section~\ref{sec:suff cond}, we can obtain the next result for the disturbance model with the instantaneous bound in~\eqref{set Di}.

\begin{proposition}
\label{prop:suff cond S-proc instant}
Let $\mathcal{S}$ be an LMI region with data $(\alpha,\beta)$.
\eqref{probl single LMI region intersec} is feasible if the next program is feasible
\begin{subequations}
\label{sol:suff cond S-proc instant}
\begingroup
\setlength{\arraycolsep}{1.5pt}
\thinmuskip=0.5mu plus 1mu
\medmuskip=1mu plus 2mu
\thickmuskip=1.5mu plus 3mu
\begin{align}
& \hspace{-5pt} \text{find} & & \hspace{-5pt} P=P^\top \succ 0, Y, \tau_0 \ge 0, \dots, \tau_{T-1} \ge 0 \label{sol:suff cond S-proc instant:positivity} \\
& \hspace{-5pt} \text{s.~t.} & & \hspace{-10pt}
\bmat{
\alpha \otimes P & \beta^\top \otimes \smat{P \\ Y}^\top \\
\beta \otimes \smat{P \\ Y} & 0}
- \sum_{i \in \mathbb{I} } \tau_i
\bmat{
I_s \otimes \mb{c}_i & I_s \otimes \mb{b}_i^\top \\
I_s \otimes \mb{b}_i & I_s \otimes \mb{a}_i
} \prec 0.
\label{sol:suff cond S-proc instant:lmi}
\end{align}
\endgroup
\end{subequations}
If \eqref{sol:suff cond S-proc instant} is feasible, the controller gain in~\eqref{probl single LMI region intersec} is $K = Y P^{-1}$.
\end{proposition}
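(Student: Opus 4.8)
The plan is to replay the proof of Proposition~\ref{prop:suff cond S-proc} essentially verbatim, the only change being that the single quadratic constraint cutting out $\mathcal{C}$ in~\eqref{set C form 1} is replaced by the family of $T$ quadratic constraints cutting out the sets $\mathcal{C}_i$ in~\eqref{set Ci}, whose intersection is $\mathcal{I}$; correspondingly, the one-multiplier S-procedure is replaced by the S-procedure with the $T$ nonnegative multipliers $\tau_0,\dots,\tau_{T-1}$. First I would use that, by~\eqref{sol:suff cond S-proc instant:positivity}--\eqref{sol:suff cond S-proc instant:lmi} and since $\smat{I_{sn}\\ \sa{Z}}$ has full column rank, for every $\sa{Z}$ one has $\smat{I_{sn} & \sa{Z}^\top}\smat{\alpha\otimes P & \beta^\top\otimes\smat{P\\ Y}^\top\\ \beta\otimes\smat{P\\ Y} & 0}\smat{I_{sn}\\ \sa{Z}}\prec\sum_{i\in\mathbb{I}}\tau_i\,\smat{I_{sn} & \sa{Z}^\top}\smat{I_s\otimes\mb{c}_i & I_s\otimes\mb{b}_i^\top\\ I_s\otimes\mb{b}_i & I_s\otimes\mb{a}_i}\smat{I_{sn}\\ \sa{Z}}$; hence, whenever each summand $\smat{I_{sn} & \sa{Z}^\top}\smat{I_s\otimes\mb{c}_i & I_s\otimes\mb{b}_i^\top\\ I_s\otimes\mb{b}_i & I_s\otimes\mb{a}_i}\smat{I_{sn}\\ \sa{Z}}$ is $\preceq 0$, the left-hand side is $\prec 0$.

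Then, exactly as in the passage to~\eqref{suff S-proc:from block to full}, I would restrict to the block-diagonal choice $\sa{Z}=I_s\otimes Z$ (a relaxation, as these form a subclass of the admissible $\sa{Z}$). By the mixed-product rule for $\otimes$, the $i$-th constraint becomes $I_s\otimes\big(\mb{c}_i+Z^\top\mb{b}_i+\mb{b}_i^\top Z+Z^\top\mb{a}_i Z\big)\preceq 0$, equivalently $\smat{I_n & Z^\top}\smat{\mb{c}_i & \mb{b}_i^\top\\ \mb{b}_i & \mb{a}_i}\smat{I_n\\ Z}\preceq 0$, which by~\eqref{set Ci:set only} is precisely $(A,B)\in\mathcal{C}_i$ for $\smat{A & B}=Z^\top$; imposing it for all $i\in\mathbb{I}$ is $(A,B)\in\mathcal{I}$ by~\eqref{set I}. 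On the objective side, again by the mixed-product rule, $\smat{I_{sn} & (I_s\otimes Z)^\top}\smat{\alpha\otimes P & \beta^\top\otimes\smat{P\\ Y}^\top\\ \beta\otimes\smat{P\\ Y} & 0}\smat{I_{sn}\\ I_s\otimes Z}$ collapses to $\alpha\otimes P+\beta\otimes\big(\smat{A & B}\smat{P\\ Y}\big)+\beta^\top\otimes\big(\smat{P\\ Y}^\top\smat{A & B}^\top\big)=\alpha\otimes P+\Tr\big\{\beta\otimes\big(\smat{A & B}\smat{P\\ Y}\big)\big\}$. Combining the two, $\alpha\otimes P+\Tr\{\beta\otimes(\smat{A & B}\smat{P\\ Y})\}\prec 0$ for every $(A,B)\in\mathcal{I}$, and the change of variables $Y=KP$ (so $\smat{A & B}\smat{P\\ Y}=(A+BK)P$) yields the constraint of~\eqref{probl single LMI region intersec}, with $K=YP^{-1}$ well defined since $P\succ 0$.

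I do not expect a genuinely hard step: the argument is the direct analogue of that of Proposition~\ref{prop:suff cond S-proc}, with the $T$ constraints glued by $\sum_{i\in\mathbb{I}}\tau_i(\cdot)$ rather than a single $\tau(\cdot)$. The only point to watch is the direction of each implication — the S-procedure is invoked only in its always-valid sufficiency direction and the passage to $\sa{Z}=I_s\otimes Z$ is a relaxation, so (as already discussed after Theorem~\ref{thm:suff cond} and for Proposition~\ref{prop:suff cond S-proc}) one obtains a sufficient condition only; here the potential conservatism is twofold, coming both from the block-diagonal restriction on $\sa{Z}$ and, for $T\ge 2$, from the non-exactness of the S-procedure with several quadratic constraints.
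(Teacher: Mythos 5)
Your proposal is correct and follows essentially the same route as the paper's proof: the multi-constraint S-procedure with multipliers $\tau_0,\dots,\tau_{T-1}$, the restriction from full $\sa{Z}$ to the block-diagonal $\sa{Z}=I_s\otimes Z$, the mixed-product identity collapsing each block constraint to the defining inequality of $\mathcal{C}_i$ (hence of $\mathcal{I}$), and the change of variables $Y=KP$. Your closing remark correctly identifies both sources of conservatism, matching the paper's discussion.
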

\begin{proof}
\eqref{sol:suff cond S-proc instant:lmi} implies by the S-procedure and \eqref{sol:suff cond S-proc instant:positivity} that
\begin{align*}
& 
\bmat{I_{sn} & \sa{Z}^\top}
\bmat{
\alpha \otimes P & \beta^\top \otimes \smat{P \\ Y}^\top \\
\beta \otimes \smat{P \\ Y} & 0}
\bmat{I_{sn} \\ \sa{Z}} \prec 0 \\
& \hspace*{3pt} \forall  \sa{Z} \colon 
\bmat{I_s \otimes I_n & \sa{Z}^\top} 
\cdoT \bmat{I_s \otimes \mb{c}_i & I_s \otimes \mb{b}_i^\top\\ I_s \otimes \mb{b}_i & I_s \otimes \mb{a}_i}
[\star]^\top \preceq 0, i \in \mathbb{I},
\end{align*}
where for any matrices $G$ and $H = H^\top$, we use the notational shorthand $G \cdoT H [\star]^\top = G H G^\top$.
Since $\sa{Z}$ is a full uncertainty, the last condition implies
\begingroup
\setlength{\arraycolsep}{1.5pt}
\begin{align}
&
\bmat{I_{sn} & \sa{Z}^\top}
\bmat{
\alpha \otimes P & \beta^\top \otimes \smat{P \\ Y}^\top \\
\beta \otimes \smat{P \\ Y} & 0}
\bmat{I_{sn} \\ \sa{Z}} \prec 0 \label{suff cond S-proc inst to block} \\
& \hspace*{5pt} \forall  \sa{Z}= I_s \otimes Z \colon  \notag \\
& \hspace*{10pt} \bmat{I_s \otimes I_n & I_s \otimes Z^\top} 
\cdoT \bmat{I_s \otimes \mb{c}_i & I_s \otimes \mb{b}_i^\top\\ I_s \otimes \mb{b}_i & I_s \otimes \mb{a}_i} [\star]^\top \preceq 0,  i \in \mathbb{I}. \notag
\end{align}
\endgroup
Analogously to the proof of Proposition~\ref{prop:suff cond S-proc}, the condition
\begin{align*}
& 0 \succeq 
\bmat{I_s \otimes I_n & I_s \otimes Z^\top} \cdoT
\bmat{I_s \otimes \mb{c}_i & I_s \otimes \mb{b}_i^\top\\ I_s \otimes \mb{b}_i & I_s \otimes \mb{a}_i}
[\star]^\top, i \in \mathbb{I}
\end{align*}
is equivalent to
\begingroup
\setlength{\arraycolsep}{1.5pt}
\thinmuskip=0.7mu plus 1mu
\medmuskip=1.4mu plus 2mu
\thickmuskip=2.1mu plus 3mu
\begin{align*}
& 0 \succeq \mb{c}_i + Z^\top \mb{b}_i + \mb{b}_i^\top Z + Z^\top \mb{a}_i Z  = 
\smat{I_n & Z^\top}
\smat{\mb{c}_i & \mb{b}_i^\top \\ \mb{b}_i & \mb{a}_i }
\smat{I_n \\ Z}, ~i \in \mathbb{I};
\end{align*}
\endgroup
\eqref{suff cond S-proc inst to block} is thereby equivalent to
\begin{align*}
& \alpha \otimes P + \Tr \Big\{ \big( I_s \otimes Z^\top \big) \big( \beta \otimes \smat{P\\ Y} \big) \Big\} \prec 0 \\
& \hspace*{60pt} \forall Z \colon \smat{I_n & Z^\top} \smat{\mb{c}_i & \mb{b}_i^\top \\ \mb{b}_i & \mb{a}_i }  \smat{I_n \\ Z} \preceq 0, i \in \mathbb{I} \\
& \iff 
\alpha \otimes P + \Tr\big\{ \beta \otimes \big( \smat{A & B} \smat{P\\ Y} \big) \big\}  \prec 0 ~ \forall (A,B) \in \mathcal{I}
\end{align*}
by the definition of $\mathcal{I}$ in~\eqref{set I} and $\mathcal{C}_i$ in~\eqref{set Ci}.
\end{proof}

In the same way as Corollary~\ref{cor:suff cond} from Theorem~\ref{thm:suff cond}, one can obtain conditions for intersections of LMI regions from Proposition~\ref{prop:suff cond S-proc instant} easily, and it is thus omitted.

\section{Numerical investigation}
\label{sec:num ex}

For continuous and discrete time, we illustrate our findings and compare their feasibility.

\subsection{Continuous time}
\label{sec:num ex:ct}

\begin{figure}
\centerline{\includegraphics[scale=.6]{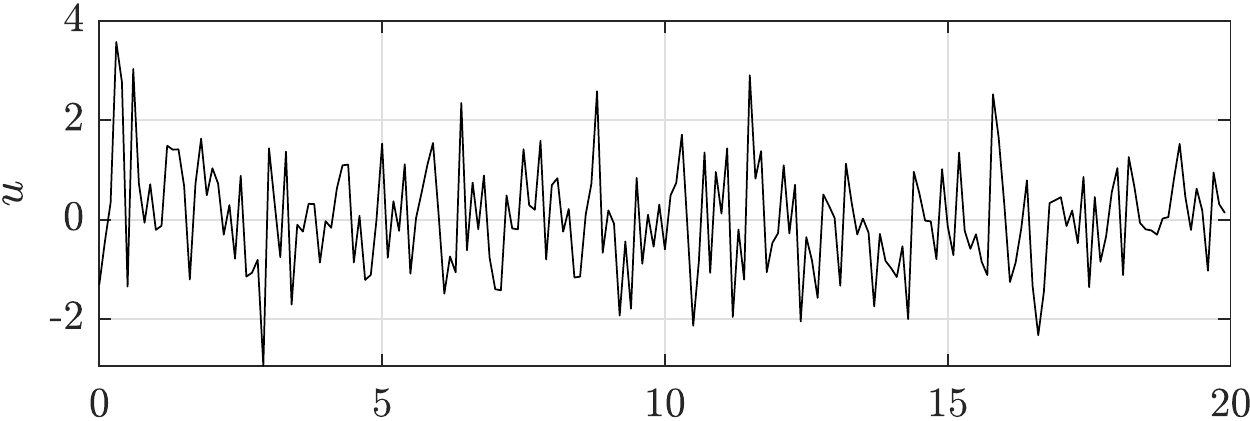}}
\centerline{\includegraphics[scale=.6]{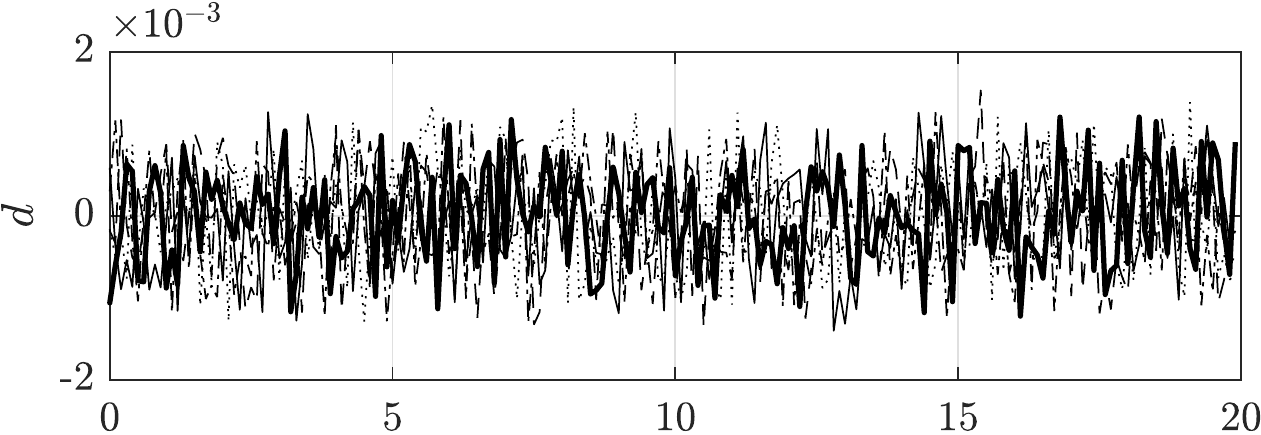}}
\centerline{\includegraphics[scale=.6]{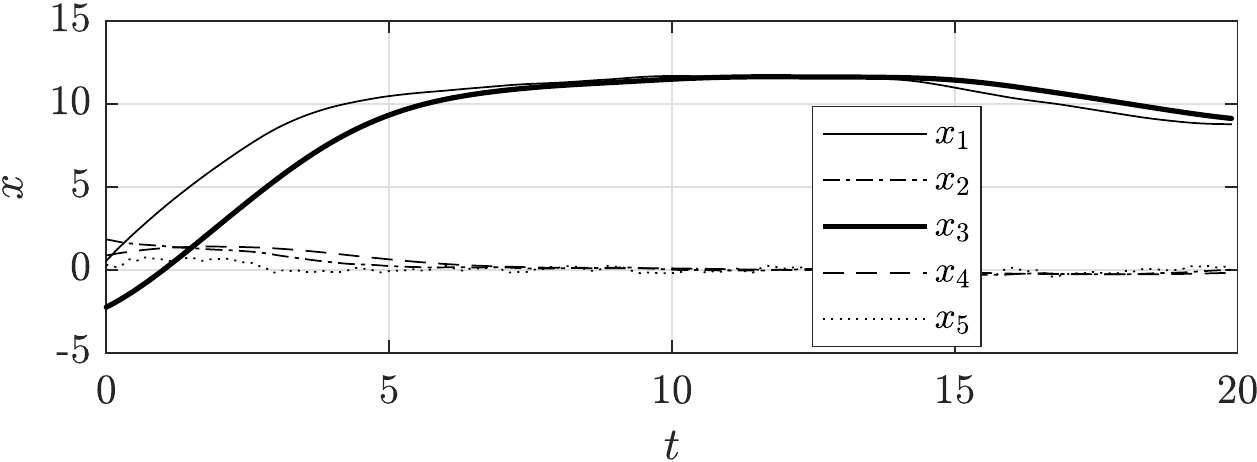}}
\caption{Experiment for data collection. The evolution of $d$ is not known and is reported only for completeness.}
\label{fig:exp}
\end{figure}

The following elements constitute our setting.
\begin{enumerate}[label=\textit{\arabic*)},left=0pt,wide]
\item We consider the \textit{dynamical system}
\begin{equation}
\label{sys:example}
\dot x =
\smat{
0      & 2      & 0     &  0      & 0\\
-0.1   & -0.35	& 0.1   &  0.1    & 0.75\\
0      & 0      & 0     &  2      & 0\\
0.4    & 0.4    & -0.4	&  -1.4   & 0\\
0      & -0.03	& 0     &  0      & -1
}
x +
\smat{0\\ 0\\ 0\\ 0\\ 1}
u +d
\end{equation}
taken from \cite[\S 9.5]{franklin1994feedback} (representing a digital tape transport). This model is used \emph{only} to generate data points and simulate the closed-loop response since no data-driven design relies on knowing it.
\item We know that the disturbance satisfies the squared norm bound $|d|^2 \le \epsilon$. This can be embedded in the \emph{disturbance model} $\mathcal{D}$ in~\eqref{set D} with $\Delta = \sqrt{T \epsilon} I$. 
Alternatively, it can be natively captured by the disturbance model $\mathcal{D}_{\tu{i}}$ in~\eqref{set Di}. 
We use $\epsilon=2.5 \cdot 10^{-6}$, so that $|d| \le \sqrt{\epsilon} = 1.58 \cdot 10^{-3}$.
\item The \emph{performance specification} is given by a region $\mathcal{S}(\ell, \rho, \theta)$, see Example~\ref{example:wedge}, with parameters $\ell=0.3$, $\rho=2$, $\theta=\pi/5.7$. 
As indicated in Example~\ref{example:wedge-approx}, this region can be inner-approximated by the intersection of the disks $\{ \ri x^2 + y^2 = \rho^2 \}$ and $\{ \ri (x - x_{\tu{t}})^2 + y^2 = \si_\theta^2 x_{\tu{t}}^2 \}$ with $x_{\tu{t}}= -1.4992$ (computed numerically).
\item A single \emph{experiment for data collection} is performed on~\eqref{sys:example}
under these conditions. 
The signals are in Fig.~\ref{fig:exp}. 
The input is obtained by interpolating linearly a sequence that is the realization of a Gaussian variable with mean zero and unit variance.
The disturbance is obtained in the same way from a realization of a random variable uniformly distributed in $|d| \le 1.58 \cdot 10^{-3}$ and is reported only for completeness since it is not accessible.
These continuous-time signals are then sampled with $T_{\tu{s}}=0.1$ to obtain $T=200$ data points of the data matrices in~\eqref{data matrices}.
\end{enumerate}

In the setting of the previous points, we compare on the same data set the designs proposed in the previous sections: 
\begin{itemize}[wide]
\item sufficient conditions in Corollary~\ref{cor:suff cond} and Proposition~\ref{prop:suff cond S-proc} for performance specification $\mathcal{S}(\ell, \rho, \theta)$ and disturbance model $\mathcal{D}$ in~\eqref{set D},
\item necessary and sufficient condition in Corollary~\ref{cor:nec and suff cond} for the inner-approximation of $\mathcal{S}(\ell, \rho, \theta)$ and $\mathcal{D}$ in~\eqref{set D},
\item sufficient condition in Proposition~\ref{prop:suff cond S-proc instant} for $\mathcal{S}(\ell, \rho, \theta)$ and disturbance model $\mathcal{D}_{\tu{i}}$ in~\eqref{set Di},
\item model-based condition in \eqref{sol:model based multiple} for $\mathcal{S}(\ell, \rho, \theta)$.
\end{itemize}
Note that Propositions~\ref{prop:suff cond S-proc}-\ref{prop:suff cond S-proc instant} need to be easily extended for the intersection of the LMI regions composing $\mathcal{S}(\ell, \rho, \theta)$.
All controller designs are obtained by YALMIP \cite{lofberg2004yalmip} and MOSEK ApS in MATLAB\textsuperscript{\textregistered} R2019b.

The resulting controller designs in terms of eigenvalues are in Fig.~\ref{fig:eigLocations}. 
All methods manage to move the eigenvalues into the desired $\mathcal{S}(\ell,\rho,\theta)$ or its inner-approximation, and the eigenvalue locations imposed by the different methods appear comparable. 
To appreciate differences, we show in Fig.~\ref{fig:cl} the time response of \eqref{sys:example} with $d=0$ in closed loop with a controller designed model-based or data-based with Corollary~\ref{cor:suff cond} or Proposition~\ref{prop:suff cond S-proc instant}. 
All responses are consistent with the specification imposed by $\mathcal{S}(\ell,\rho,\theta)$: e.g., the exhibited convergence rates of around $15$ time units are consistent with $\ell=0.3$. 
The model-based solution, which does not need to robustly stabilize a set of consistent matrices, shows on the other hand a smaller overshoot.

\begin{figure}
\centerline{\includegraphics[scale=.65]{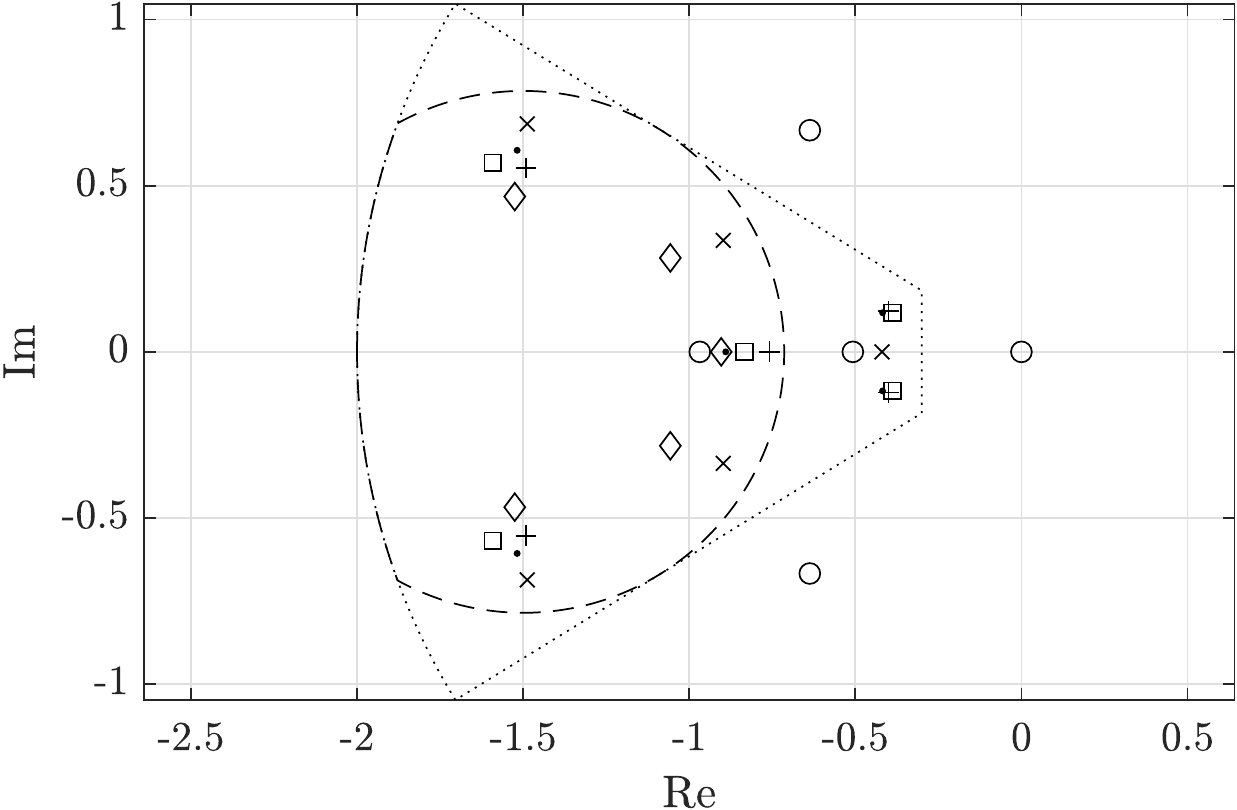}\hspace*{5mm}}
\caption{Eigenvalue locations: 
open loop ($\circ$), 
model-based ($\times$), 
data-based with Corollary~\ref{cor:suff cond} ($\square$),
with Corollary~\ref{cor:nec and suff cond} ($\diamond$),
with Proposition~\ref{prop:suff cond S-proc} ($+$), 
with Proposition~\ref{prop:suff cond S-proc instant} ($\boldsymbol{\cdot}$).
The region $\mathcal{S}(\ell,\rho,\theta)$ is dotted, and its inner-approximation dashed.}
\label{fig:eigLocations}
\end{figure}

\begin{figure}
\centerline{\includegraphics[scale=.6]{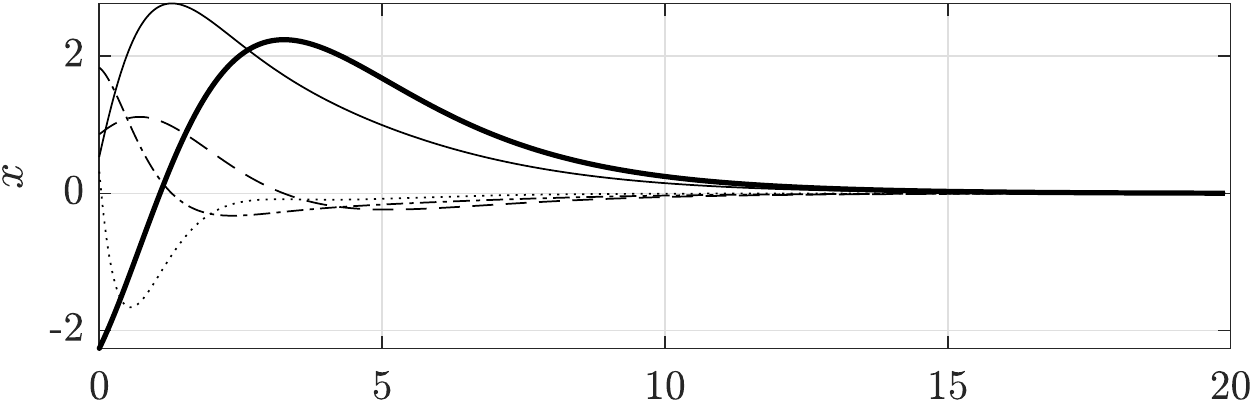}}
\centerline{\includegraphics[scale=.6]{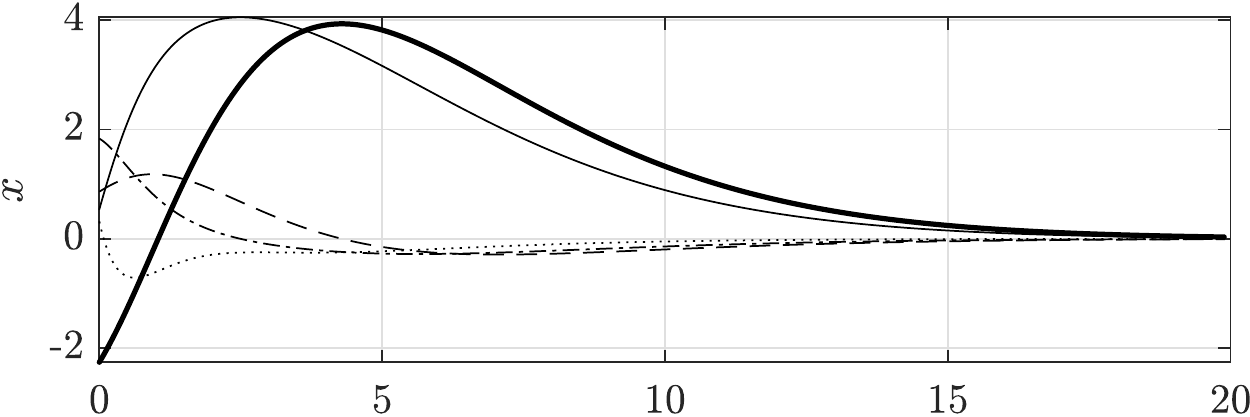}}
\centerline{\includegraphics[scale=.6]{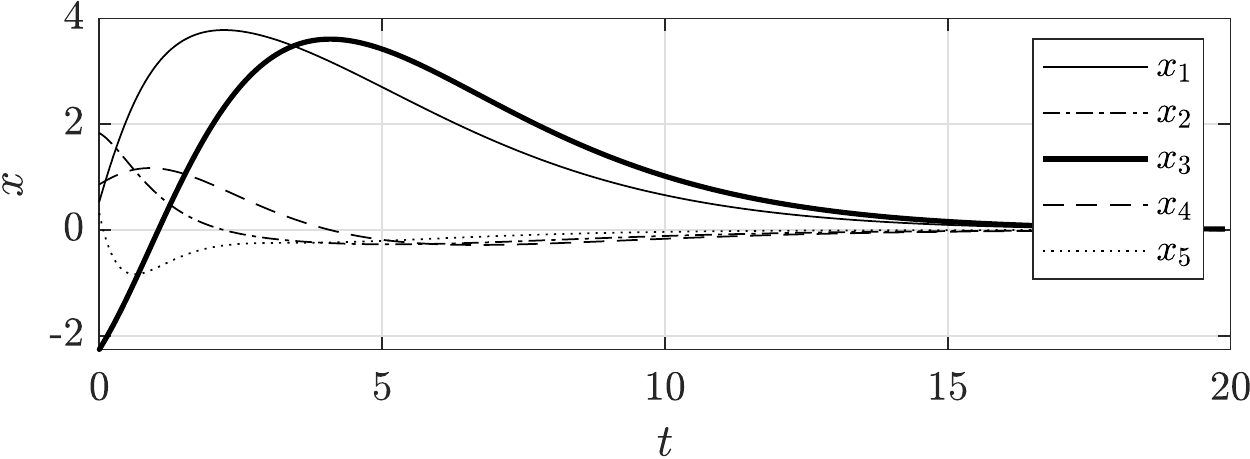}}
\caption{Closed-loop response of \eqref{sys:example} with controller: model-based (top), data-based and Corollary~\ref{cor:suff cond} (middle), data-based and Proposition~\ref{prop:suff cond S-proc instant} (bottom).}
\label{fig:cl}
\end{figure}

Next, we would like to examine which of the conditions can withstand the largest disturbance bound $\epsilon$ and summarize the results in Table~\ref{tab:epsilonVSmethods}.
For the specific system and performance specification, the method with largest robustness to the disturbance bound $\epsilon$ is given by Proposition~\ref{prop:suff cond S-proc instant}; this can be expected for the reasons discussed in~\cite{bisoffi2021tradeoffs} when the disturbance model is natively a norm bound $|d|^2\le \epsilon$. 
In particular, a disturbance model $\mathcal{D}$ obtained from this norm bound induces a set $\mathcal{C}$ that is at least as large as the set $\mathcal{I}$ induced by $\mathcal{D}_{\tu{i}}$ as in Section~\ref{sec:suff cond S-proc instant}. 
Moreover, the set $\mathcal{I}$ has the desirable property that it shrinks with an increasing $T$; $T=200$ is enough to make Proposition~\ref{prop:suff cond S-proc instant} more competitive than Proposition~\ref{prop:suff cond S-proc}, and a larger $T$ could withstand a larger $\epsilon$ at the expense of a higher computational cost due to the presence of $T$ extra variables $\tau_0$, \dots, $\tau_{T-1}$, see \cite{bisoffi2021tradeoffs}.
For methods working with $\mathcal{C}$, Proposition~\ref{prop:suff cond S-proc} and Corollary~\ref{cor:suff cond} are obtained in a conceptually similar way, but the former seems to have an edge over the latter.
Finally, although no conservatism is introduced by Corollary~\ref{cor:nec and suff cond} for the special LMI regions satisfying Assumption~\ref{ass:rank 1}, conservatism is introduced by inner-approximating the performance specification by special LMI regions, and this source of conservatism appears to be actually more significant.

\begin{table}
\centering
\begin{tabular}{rcccc}
\toprule
{$\epsilon$\hspace*{7mm}} & Prop.~\ref{prop:suff cond S-proc instant} & Prop.~\ref{prop:suff cond S-proc} & Cor.~\ref{cor:suff cond} & Cor.~\ref{cor:nec and suff cond} \\
\midrule
\grayRow $2.5\cdot 10^{-4}$	& \pseudoX		& \pseudoX		& \pseudoX		& \pseudoX\\
$1\cdot 10^{-4}$	& \checkmark	& \pseudoX		& \pseudoX		& \pseudoX\\
\grayRow $2.5\cdot 10^{-5}$	& \checkmark	& \checkmark	& \pseudoX		& \pseudoX\\
$1\cdot 10^{-5}$	& \checkmark	& \checkmark	& \checkmark	& \pseudoX\\
\grayRow $2.5\cdot 10^{-6}$	& \checkmark	& \checkmark	& \checkmark	& \checkmark\\
\bottomrule
\end{tabular}
\caption{For $\epsilon$, feasibility (\checkmark) or infeasibility (\pseudoX) of methods.}
\label{tab:epsilonVSmethods}
\end{table}

\subsection{Discrete time}
\label{sec:num ex:dt}

The following elements constitute our setting where we mention only those different than Section~\ref{sec:num ex:ct}.
\begin{enumerate}[label=\textit{\arabic*)},left=0pt,wide]
\item We consider the \textit{dynamical system}
\begin{equation}
\label{sys:example-DT}
x^+ = (I - \tfrac{1}{2} L) x + \smat{0\\ 0\\ 1\\ 0\\ 0} u + d,
L :=
\smat{
1	& 0		& -1	&  0	& 0\\
-1	& 1		& 0		&  0	& 0\\
0	& -1	& 1		&  0	& 0\\
0	& 0		& 0		&  1	& -1\\
-1	& 0		& 0     &  -1	& 2}
\end{equation}
where $L$ is the Laplacian matrix of an underlying digraph.

\item We consider the same \textit{disturbance models} as in Section~\ref{sec:num ex:ct}, now with $\epsilon=1 \cdot 10^{-5}$ and $|d| \le \sqrt{\epsilon}= 3.16 \cdot 10^{-3}$.

\item The \emph{performance specification} is given by a disk with center $(\frac{0.04+0.9}{2},0)$ and radius $\frac{-0.04+0.9}{2}$, which is a disk contained in the performance region described in Example~\ref{example:perf dt} and in Figure~\ref{fig:S(ell,rho,theta)}, right.

\item A single \emph{experiment for data collection} is performed on~\eqref{sys:example-DT} under these conditions. 
For $T=200$, the signals are in Fig.~\ref{fig:exp-DT}. 
The input is the realization of a Gaussian variable with mean zero and unit variance, the disturbance is the realization of a random variable uniformly distributed in $|d| \le 3.16 \cdot 10^{-3}$ and is reported only for completeness since it is not accessible.
\end{enumerate}

\begin{figure}
\centerline{\includegraphics[scale=.6]{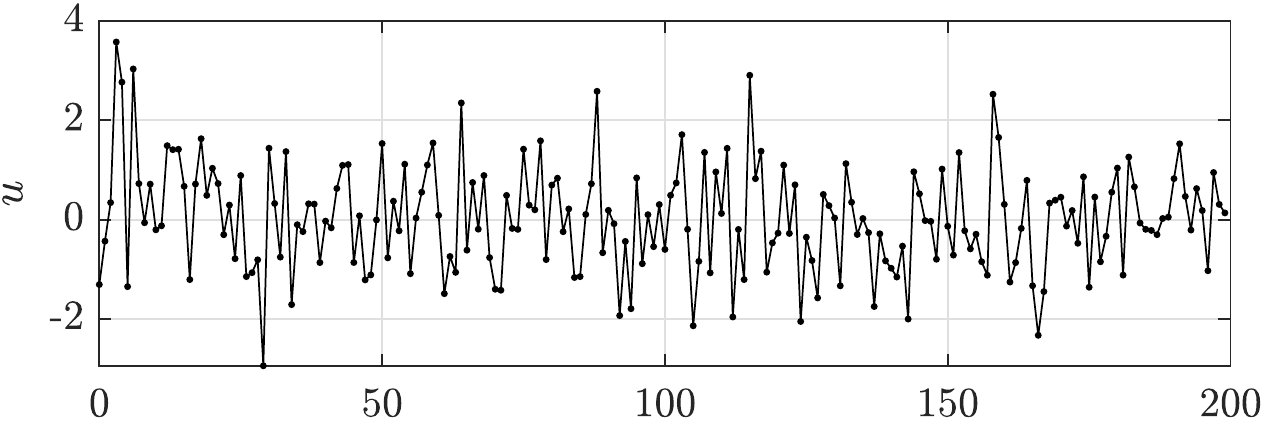}}
\centerline{\includegraphics[scale=.6]{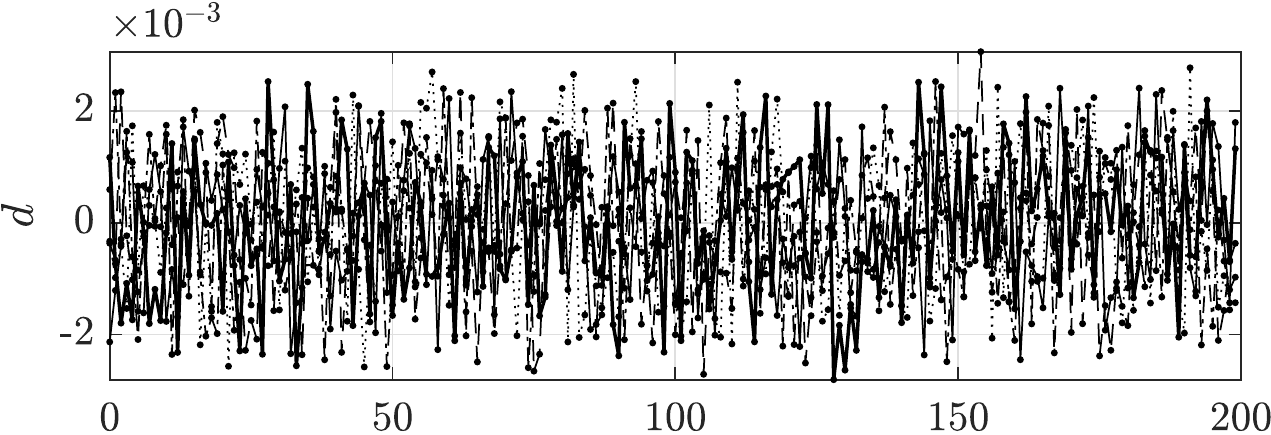}}
\centerline{\includegraphics[scale=.6]{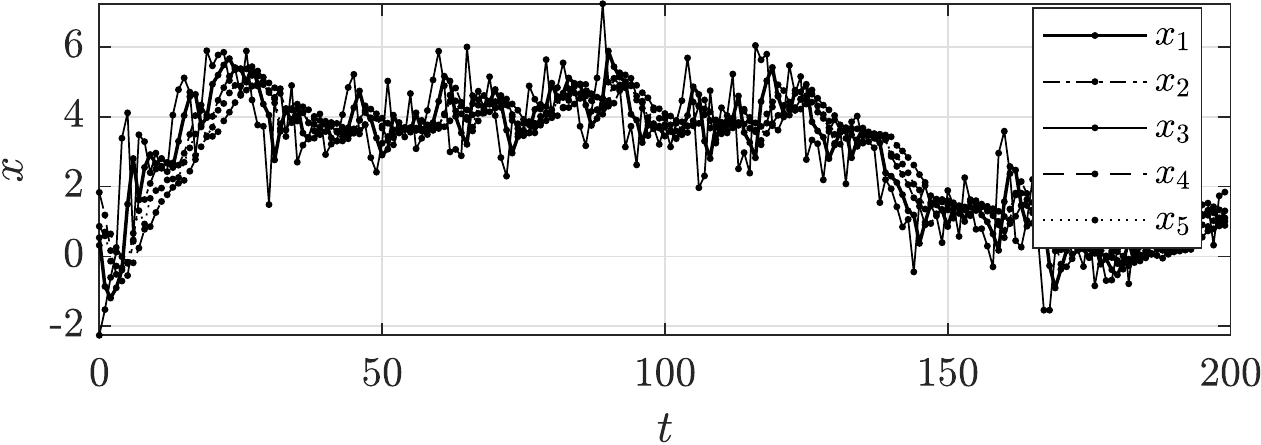}}
\caption{Experiment for data collection. The evolution of $d$ is not known and is reported only for completeness.}
\label{fig:exp-DT}
\end{figure}

Within this setting, we consider the same designs as in Section~\ref{sec:num ex:ct} and compare them for the disk giving the performance specification in discrete time.
The resulting controller designs in terms of eigenvalues are in Fig.~\ref{fig:eigLocations-DT}. 
All methods manage to move the eigenvalues into the desired disk, and the eigenvalue locations imposed by the different methods appear comparable. 
In Fig.~\ref{fig:cl-DT}, the time responses of \eqref{sys:example-DT} with $d=0$ in closed loop with a controller designed model-based or data-based with Corollary~\ref{cor:suff cond} or Proposition~\ref{prop:suff cond S-proc instant} are consistent with the specification and show a smaller overshoot in the model-based case.

\begin{figure}
\centerline{\includegraphics[scale=.65]{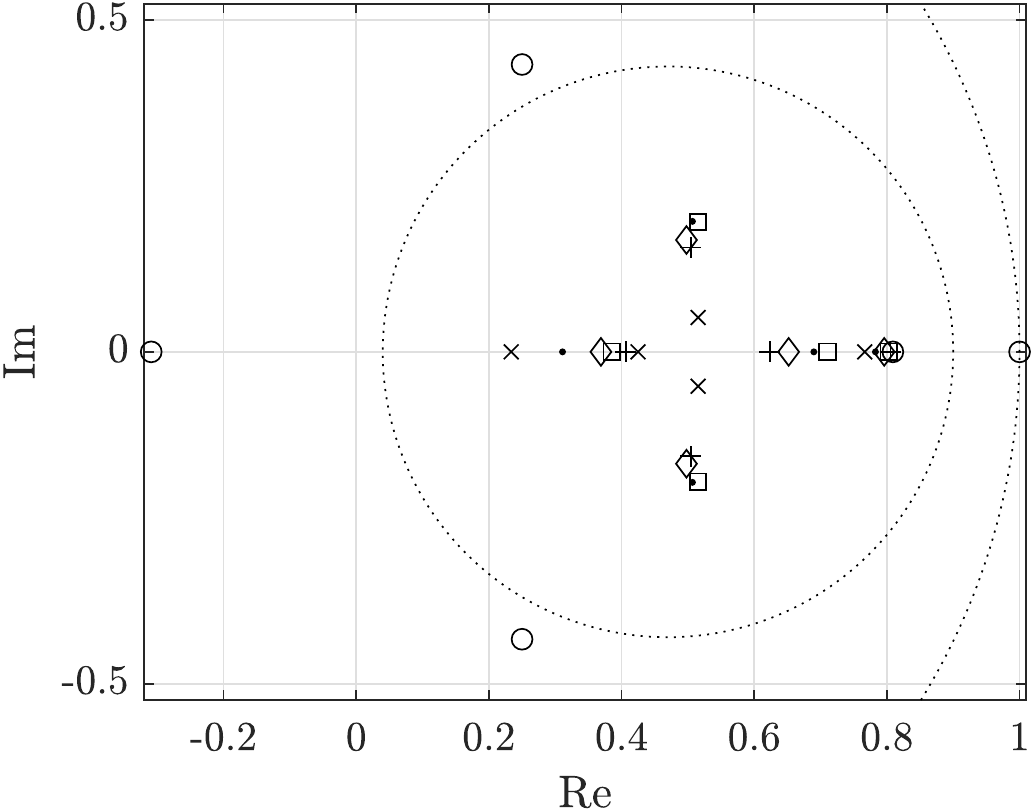}\hspace*{5mm}}
\caption{Eigenvalue locations: 
open loop ($\circ$), 
model-based ($\times$), 
data-based with Corollary~\ref{cor:suff cond} ($\square$),
with Corollary~\ref{cor:nec and suff cond} ($\diamond$),
with Proposition~\ref{prop:suff cond S-proc} ($+$), 
with Proposition~\ref{prop:suff cond S-proc instant} ($\boldsymbol{\cdot}$).
The unit and performance-specification disks are dotted.}
\label{fig:eigLocations-DT}
\end{figure}

\begin{figure}
\centerline{\includegraphics[scale=.6]{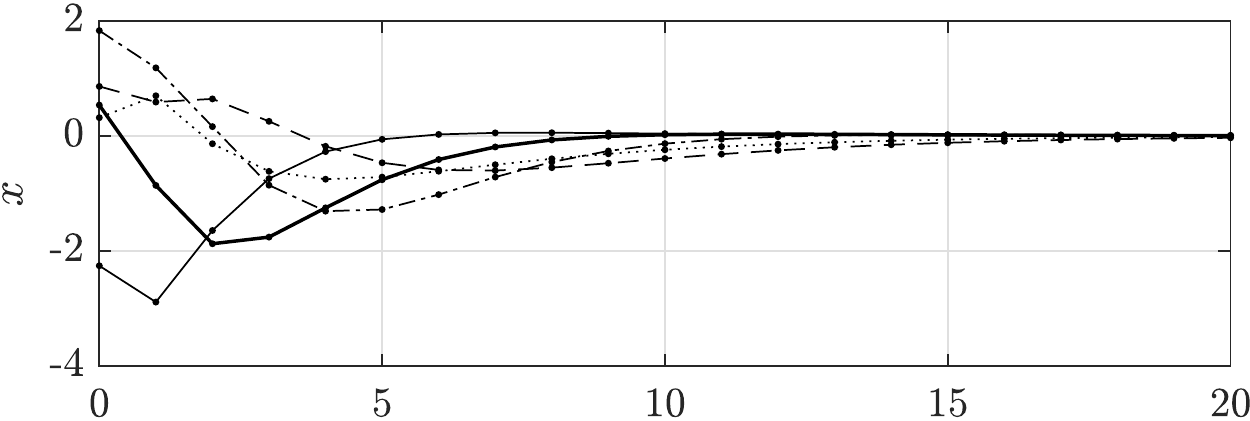}}
\centerline{\includegraphics[scale=.6]{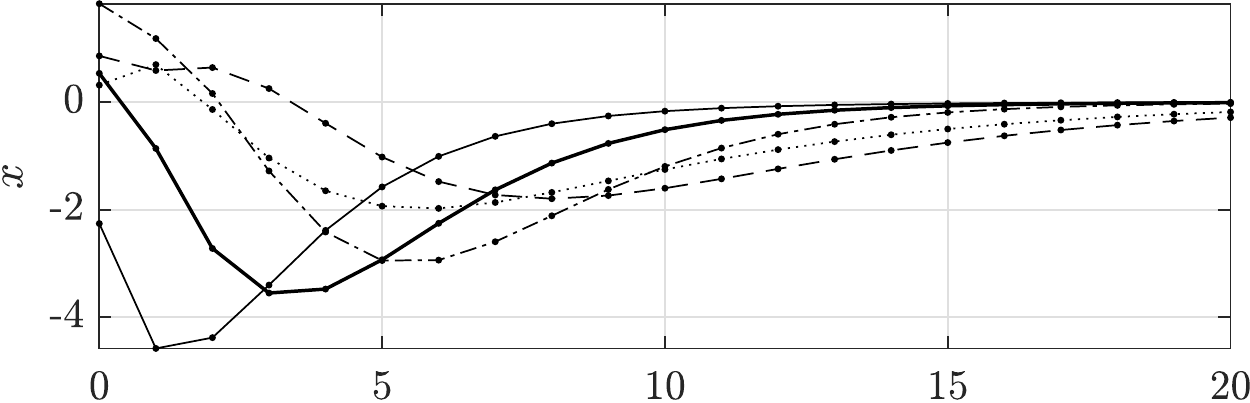}}
\centerline{\includegraphics[scale=.6]{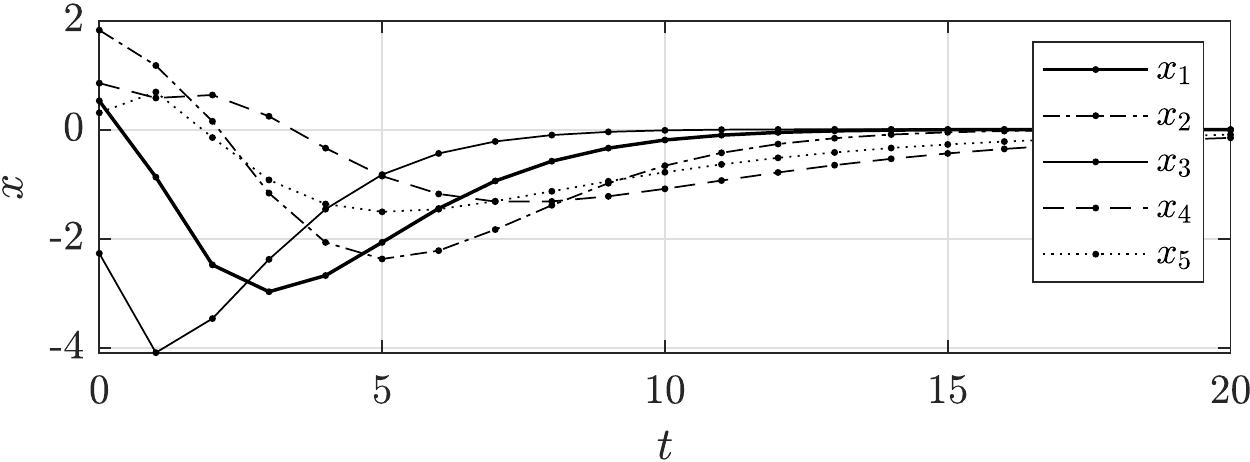}}
\caption{Closed-loop response of \eqref{sys:example-DT} with controller: model-based (top), data-based and Corollary~\ref{cor:suff cond} (middle), data-based and Proposition~\ref{prop:suff cond S-proc instant} (bottom).}
\label{fig:cl-DT}
\end{figure}

Next, we examine different values of $\epsilon$ as in Section~\ref{sec:num ex:ct}, and summarize the results in Table~\ref{tab:epsilonVSmethods-DT}.
The conclusions are analogous to those in Section~\ref{sec:num ex:ct}, except for the fact that now we no longer use Corollary~\ref{cor:nec and suff cond} on an inner-approximation. 
Indeed, when applied on a performance specification given by disk, Corollary~\ref{cor:nec and suff cond} gives a necessary and sufficient condition unlike the sufficient conditions of Corollary~\ref{cor:suff cond} and Proposition~\ref{prop:suff cond S-proc}, and is then able to withstand a larger $\epsilon$.

\begin{table}
\centering
\begin{tabular}{rcccc}
\toprule
{$\epsilon$\hspace*{7mm}} & Prop.~\ref{prop:suff cond S-proc instant} & Prop.~\ref{prop:suff cond S-proc} & Cor.~\ref{cor:suff cond} & Cor.~\ref{cor:nec and suff cond} \\
\midrule
\grayRow  $2.5\cdot 10^{-4}$	& \pseudoX		& \pseudoX		& \pseudoX		& \pseudoX\\
$1\cdot 10^{-4}$	& \checkmark	& \pseudoX		& \pseudoX		& \pseudoX\\
\grayRow $5\cdot 10^{-5}$	& \checkmark	& \pseudoX	& \pseudoX		& \checkmark\\
$2.5\cdot 10^{-5}$	& \checkmark	& \checkmark	& \pseudoX	& \checkmark\\
\grayRow $1\cdot 10^{-5}$	& \checkmark	& \checkmark	& \checkmark	& \checkmark\\
\bottomrule
\end{tabular}
\caption{For $\epsilon$, feasibility (\checkmark) or infeasibility (\pseudoX) of methods.}
\label{tab:epsilonVSmethods-DT}
\end{table}

\section*{Appendix}

The analytic expressions of the LMI regions with $s=2$ in Fig.~\ref{fig:basic LMI regions} are listed in Table~\ref{tab:sets}: for convenience, the last column is $2 \cdot \beta$ and, for an angle $\theta$, $\sin \theta$ or $\cos \theta$ are abbreviated as $\si_\theta$ or $\co_\theta$. We emphasize that the vertical halfplanes of the two initial rows could be obtained also for $s=1$ with, respectively, $\alpha = - l$ and $\beta = 1/2$ and $\alpha = r$ and $\beta = -1/2$.

\begin{table*}
\begin{tabular}{llll}
\toprule
Description & $\mathcal{S} = \{ \cnri$ & $\alpha$ & $2 \beta$\\
\midrule
\grayRow Vertical halfplane left of $l$ & $\quad x < l\}$ & $\smat{-l & 0\\0 & -1}$ & $\smat{1 & 0\\0 & 0}$\\
Vertical halfplane right of $r$ & $\quad x > r \}$ & $\smat{r & 0\\0 & -1}$ & $\smat{-1 & 0\\0 & 0}$\\
\grayRow Open disk with center $(x_{\tu{d}},0)$, radius $r_{\tu{d}}>0$ & $ \quad (x-x_{\tu{d}})^2 + y^2 < r_{\tu{d}}^2 \}$ & $\smat{-r_{\tu{d}} & x_{\tu{d}} \\  x_{\tu{d}}  & -r_{\tu{d}}}$ & $\smat{0 & 0\\ -2 & 0}$\\
Vertical strip with extremes $l<r$ & $ \quad l < x < r \}$ & $\smat{-r & 0\\0 & l}$ & $\smat{1 & 0\\0 & -1}$\\
\grayRow Horizontal strip with semiwidth $w>0$ & $\quad y^2 < w^2 \}$ & $\smat{-w & 0\\0 & -w}$ & $\smat{0 & 1\\-1 & 0}$\\
Ellipsoid with center $(x_{\tu{e}},0)$, semiaxes $\mu_1>0$ and $\mu_2>0$ & $\quad \frac{(x-x_{\tu{e}})^2}{\mu_1^2} + \frac{y^2}{\mu_2^2} < 1\}$ & $\smat{-\mu_1^2 & x_{\tu{e}} \mu_2 \\x_{\tu{e}} \mu_2 & -\mu_2^2}$ & $ \smat{0 & \mu_1-\mu_2 \\ -\mu_1 -\mu_2 & 0}$\\
\grayRow Left parabola with vertex $(x_{\tu{p}},0)$, curvature $c_{\tu{p}}>0$ & $\quad x< x_{\tu{p} } - (c_{\tu{p}}/2) y^2 \}$ & $\smat{-1 & 0\\0 & -x_{\tu{p}}}$ & $\smat{0 & \sqrt{\frac{c_{\tu{p}}}{2}}\\-\sqrt{\frac{c_{\tu{p}}}{2}} & 1}$\\
Right parabola with vertex $(x_{\tu{p}},0)$, curvature $c_{\tu{p}}>0$ & $\quad x> x_{\tu{p} } + (c_{\tu{p}}/2) y^2 \}$ & $\smat{-1 & 0\\0 & x_{\tu{p}}}$ & $\smat{0 & \sqrt{\frac{c_{\tu{p}}}{2}} \\ -\sqrt{\frac{c_{\tu{p}}}{2}} & -1}$\\
\grayRow Left hyperbola with vertex $(-x_{\tu{h}},0)$, asymptotes $\pm c_{\tu{h}} x$, $x_{\tu{h}}>0$, $c_{\tu{h}} > 0$& $\quad y^2 < c_{\tu{h}}^2 (x^2 - x^2_{\tu{h}}),x<0 \}$ & $\smat{0 & c_{\tu{h}} x_{\tu{h}} \\ c_{\tu{h}} x_{\tu{h}} & 0}$ & $\smat{c_{\tu{h}} & 1\\-1 & c_{\tu{h}}}$\\
Right hyperbola with vertex $(x_{\tu{h}},0)$, asymptotes $\pm c_{\tu{h}} x$, $x_{\tu{h}}>0$, $c_{\tu{h}} > 0$ & $\quad y^2 < c_{\tu{h}}^2 (x^2 - x^2_{\tu{h}}),x>0 \}$ & $\smat{0 & c_{\tu{h}} x_{\tu{h}} \\ c_{\tu{h}} x_{\tu{h}} & 0}$ & $\smat{-c_{\tu{h}} & 1\\-1 & -c_{\tu{h}}}$\\
\grayRow Left cone with vertex $(x_{\tu{c}},0)$, semiaperture $\theta \in (0,\pi/2)$ & $\quad \co_\theta |y| < \si_\theta (x_{\tu{c}} - x),x<x_{\tu{c}} \}$ & $ -\si_\theta x_{\tu{c}} \smat{1 & 0 \\ 0 & 1}$ & $\smat{\si_\theta & \co_\theta\\-\co_\theta & \si_\theta}$\\
Right cone with vertex $(x_{\tu{c}},0)$, semiaperture $\theta \in (0,\pi/2)$ & $\quad \co_\theta |y| < \si_\theta (x - x_{\tu{c}}),x > x_{\tu{c}} \}$ & $ \si_\theta x_{\tu{c}} \smat{1 & 0 \\ 0 & 1}$ & $ -\smat{\si_\theta & -\co_\theta\\ \co_\theta & \si_\theta}$\\
\bottomrule
\end{tabular}
\caption{Expressions of the LMI regions with $s=2$ in Fig.~\ref{fig:basic LMI regions}.}
\label{tab:sets}
\end{table*}

\bibliographystyle{plain}
\bibliography{pubs.bib}

\begin{thebibliography}{10}

\bibitem{berberich2019robust}
J.~Berberich, A.~Romer, C.~W. Scherer, and F.~Allg{\"o}wer.
\newblock Robust data-driven state-feedback design.
\newblock In {\em Proc. Amer. Control Conf.}, 2020.

\bibitem{berberich2020combining}
J.~Berberich, C.~W. Scherer, and F.~Allg{\"o}wer.
\newblock Combining prior knowledge and data for robust controller design.
\newblock {\em arXiv preprint arXiv:2009.05253}, 2020.

\bibitem{bisoffiArXivPetersen}
A.~Bisoffi, C.~{De Persis}, and P.~Tesi.
\newblock Data-driven control via {P}etersen’s lemma.
\newblock {\em arXiv preprint arXiv:2109.12175}, 2021.

\bibitem{bisoffi2021tradeoffs}
A.~Bisoffi, C.~{De Persis}, and P.~Tesi.
\newblock Trade-offs in learning controllers from noisy data.
\newblock {\em Systems \& Control Letters}, 2021.

\bibitem{chilali1996hInf}
M.~Chilali and P.~Gahinet.
\newblock {$H_\infty$} design with pole placement constraints: an {LMI}
  approach.
\newblock {\em IEEE Trans. Autom. Control}, 41(3):358--367, 1996.

\bibitem{chilali1999robust}
M.~Chilali, P.~Gahinet, and P.~Apkarian.
\newblock Robust pole placement in {LMI} regions.
\newblock {\em IEEE Trans. Autom. Control}, 44(12):2257--2270, 1999.

\bibitem{cocetti2020hybrid}
M.~Cocetti, S.~Donnarumma, L.~De~Pascali, M.~Ragni, F.~Biral, F.~Panizzolo,
  P.~Rinaldi, A.~Sassaro, and L.~Zaccarian.
\newblock Hybrid nonovershooting set-point pressure regulation for a wet
  clutch.
\newblock {\em IEEE/ASME Trans. Mechatr.}, 25(3):1276--1287, 2020.

\bibitem{coulson2018deepc}
J.~Coulson, J.~Lygeros, and F.~D{\"o}rfler.
\newblock Data-enabled predictive control: In the shallows of the {D}ee{PC}.
\newblock In {\em Proc. Eur. Control Conf.}, 2019.

\bibitem{dai2020ifac}
T.~Dai, M.~Sznaier, and B.~Roig~Solvas.
\newblock Data-driven quadratic stabilization of continuous {LTI} systems.
\newblock In {\em Proc. IFAC World Congress}, 2020.

\bibitem{depersis2020tac}
C.~{De Persis} and P.~Tesi.
\newblock Formulas for data-driven control: Stabilization, optimality and
  robustness.
\newblock {\em IEEE Trans. Autom. Control}, 65(3):909--924, 2020.

\bibitem{depersis2021lowcomplexity}
C.~{De Persis} and P.~Tesi.
\newblock Low-complexity learning of linear quadratic regulators from noisy
  data.
\newblock {\em Automatica}, 128, 2021.

\bibitem{franklin1994feedback}
G.~F. Franklin, J.~D. Powell, and A.~Emami-Naeini.
\newblock {\em Feedback control of dynamic systems - 3rd edition}.
\newblock Addison-Wesley, 1994.

\bibitem{horn1994topics}
R.~A. Horn and C.~R. Johnson.
\newblock {\em Topics in matrix analysis}.
\newblock Cambridge University Press, 1991.

\bibitem{lofberg2004yalmip}
J.~L{\"{o}}fberg.
\newblock {YALMIP}: A toolbox for modeling and optimization in {MATLAB}.
\newblock In {\em Proc. IEEE Int. Symp. Comp. Aid. Contr. Sys. Design}, 2004.

\bibitem{mania2019certainty}
H.~Mania, S.~Tu, and B.~Recht.
\newblock Certainty equivalence is efficient for linear quadratic control.
\newblock {\em arXiv preprint arXiv:1902.07826}, 2019.

\bibitem{olalla2011optimal}
C.~Olalla, I.~Queinnec, R.~Leyva, and A.~El~Aroudi.
\newblock Optimal state-feedback control of bilinear {DC}--{DC} converters with
  guaranteed regions of stability.
\newblock {\em IEEE Trans. Ind. Electr.}, 59(10):3868--3880, 2011.

\bibitem{pereira2013multiple}
L.~F.~A. Pereira, J.~V. Flores, G.~Bonan, D.~F. Coutinho, and J.~M.~G.
  da~Silva.
\newblock Multiple resonant controllers for uninterruptible power supplies —
  {A} systematic robust control design approach.
\newblock {\em IEEE Trans. Ind. Electr.}, 61(3):1528--1538, 2013.

\bibitem{petersen1987stabilization}
I.~R. Petersen.
\newblock A stabilization algorithm for a class of uncertain linear systems.
\newblock {\em Systems \& Control Letters}, 8(4):351--357, 1987.

\bibitem{poussotvassal2016gust}
C.~Poussot-Vassal, F.~Demourant, A.~Lepage, and D.~Le~Bihan.
\newblock Gust load alleviation: Identification, control, and wind tunnel
  testing of a {2-D} aeroelastic airfoil.
\newblock {\em IEEE Trans. Control Sys. Tech.}, 25(5):1736--1749, 2016.

\bibitem{sznaier2021survey}
M.~Sznaier.
\newblock Control oriented learning in the era of big data.
\newblock {\em IEEE Control Systems Letters}, 5(6):1855--1867, 2021.

\bibitem{vanwaarde2020noisy}
H.~J. van Waarde, M.~K. Camlibel, and M.~Mesbahi.
\newblock From noisy data to feedback controllers: non-conservative design via
  a matrix {S}-lemma.
\newblock {\em IEEE Trans. Autom. Control}, 2020.
\newblock Early Access.

\bibitem{willems2005note}
J.~C. Willems, P.~Rapisarda, I.~Markovsky, and B.~De~Moor.
\newblock A note on persistency of excitation.
\newblock {\em Systems \& Control Letters}, 54(4):325--329, 2005.

\bibitem{xue2020datadriven}
A.~Xue and N.~Matni.
\newblock Data-driven system level synthesis.
\newblock In {\em Proc. 3rd Conf. Learning for Dynam. Contr.}, 2021.

\end{thebibliography}

\end{document}